\pgfplotsset{compat=1.14}
\newcommand{\Fig}[1]{Fig.~\ref{#1}}
\newcommand{\distas}[1]{\mathbin{\overset{#1}{\kern\z@\sim}}}%
\newsavebox{\mybox}\newsavebox{\mysim}
\newcommand{\distras}[1]{%
  \savebox{\mybox}{\hbox{\kern3pt$\scriptstyle#1$\kern3pt}}%
  \savebox{\mysim}{\hbox{$\sim$}}%
  \mathbin{\overset{#1}{\kern\z@\resizebox{\wd\mybox}{\ht\mysim}{$\sim$}}}%
}
\newtheorem{theorem}{Theorem}[section]
\newtheorem{lemma}[theorem]{Lemma}
\newtheorem{claim}{Claim}
\newtheorem{define}{Definition}
\newtheorem{remark}{Remark}
\newcommand{\Pcc}[2]{P_{c_{[#1][#2]}}}
\newcommand{\norm}[1]{\left\| #1 \right\|}
\newcommand{\Pb}[1]{\mathbb{P}\left[#1 \right]}
\newcommand{\Ex}[1]{\mathbb{E}\left[#1 \right]}
\def\PPA{P^{\perp}_{A_1}}
\def\hcc{\sum_{i\in S_2^*}H_i c_i}
\def\PT{P_{tot}}
\def\PTX{P_{tot,\nu}}
\def\phso{\left(1+P'\sum_{i\in S_0\setminus S_2^*}|H_i|^2\right)}
\def\cn{\mathcal{CN}}
\def\cd{\mathcal{C}}
\def\calC{\mathcal{C}}
\def\Ieee{IEEEeqnarray*}
\def\Ieeen{\IEEEyesnumber}
\algnewcommand\algorithmicinput{\textbf{Input:}}
\algnewcommand\INPUT{\item[\algorithmicinput]}
\algnewcommand\algorithmicoutput{\textbf{Output:}}
\algnewcommand\OUTPUT{\item[\algorithmicoutput]}
\begin{document}

\title{
Energy efficient coded random access for the wireless uplink\\
\thanks{This material is based upon work supported by the National Science Foundation under Grant No CCF-17-17842, CAREER award under grant agreement CCF-12-53205, the Hong Kong  Innovation and Technology Fund (ITS/066/17FP) under the HKUST-MIT Research Alliance Consortium and a grant from Skoltech--MIT Joint Next Generation Program (NGP).} 
\thanks{The research of A. Frolov and K. Andreev was supported by the Russian Science Foundation (project no. 18-19-00673).}
}

\author{
  \IEEEauthorblockN{Suhas S Kowshik\\}
  \IEEEauthorblockA{MIT\\
                    suhask@mit.edu\\}
  \and
  \IEEEauthorblockN{Kirill Andreev\\}
  \IEEEauthorblockA{SkolTech (Moscow)\\
                    k.andreev@skoltech.ru\\}
  \and
  \IEEEauthorblockN{Alexey Frolov\\}
  \IEEEauthorblockA{SkolTech (Moscow)\\
                    al.frolov@skoltech.ru\\}
  \and
  \IEEEauthorblockN{Yury Polyanskiy\\}
  \IEEEauthorblockA{MIT\\
                    yp@mit.edu\\}
}

\maketitle
\thispagestyle{plain}
\pagestyle{plain}

\begin{abstract}
We discuss the problem of designing channel access architectures for enabling fast, low-latency, grant-free and
uncoordinated uplink for densely packed wireless nodes. Specifically, we study random-access codes, previously
introduced for the AWGN multiple-access channel (MAC) by Polyanskiy'2017, in the practically more
relevant case of users subject to Rayleigh fading, when channel gains are unknown to the decoder. We propose a random
coding achievability bound, which we analyze both non-asymptotically (at finite blocklength) and asymptotically. As a candidate practical solution, we propose an explicit 
sparse-graph based coding scheme together with an alternating belief-propagation decoder. The
latter's performance is found to be surprisingly close to the finite-blocklength bounds. Our main
findings are twofold. First, just like in the AWGN MAC we see that jointly decoding large number of users leads to a
surprising phase transition effect, where at spectral efficiencies below a critical threshold (5-15 bps/Hz depending on
reliability) a perfect multi-user interference cancellation is possible. Second, while the presence of Rayleigh fading significantly increases the minimal required energy-per-bit $E_b/N_0$ (from about
0-2 dB to about 8-11 dB), the inherent randomization introduced by the channel makes it much easier to attain
the optimal performance via iterative schemes. 

In all, it is hoped that a principled definition of the random-access model together with
our information-theoretic analysis will open the road towards unified benchmarking and comparison performance of various 
random-access solutions, such as the currently discussed candidates (MUSA, SCMA, RSMA) for the 5G/6G.
\end{abstract}

\section{Introduction}
\label{sec:intro}

Presently, wireless networks are starting to see a new type of load (a so-called mMTC or machine-type communication), in which hundreds of thousands of devices are serviced by a single base station, each communicating very small and infrequent data payloads. In the interest of reducing hardware complexity, reducing latency and improving energy consumption, the conceptual paradigm shift is to move to the \textit{grant-free} access management, in which uplink communication is not orthogonalized by the base-station (as is done in today's systems). This requires new kinds of codes that can be decoded from uncoordinated and colliding  transmissions.

In this work, we aim to understand the fundamental tradeoffs of these dense random access systems, and provide coding solutions that are close to achieving these fundamental limits. Specifically, we consider a problem of a large number of nodes (potentially unbounded) with any $K_a$ of them communicating to a single access point or base station (BS) over a frame synchronous multiple access channel (MAC) with frame length $n$.

An information-theoretic formulation of this problem was done in \citep{polyanskiy2017perspective} where the author considered an additive white Gaussian noise (AWGN) random access channel (RAC) model. In this formulation the random access is modeled as follows: each of $K_a$ active users encodes his $k$-bit message into an $n$-symbol codeword. The receiver observes superposition of $K_a$ codewords corrupted by the AWGN.
There are a number of challenges in this model: finite blocklength (FBL) effects due to small payload size, massive number of users (comparable to blocklength), sparsity due to random access and incorporating accurate channel models. However, the most crucial departure from canonical MAC is that the users are required to share the same codebook (i.e. they are unidentifiable, unless they desire to put their identity as part of the $k$-bit payload), and the decoder is only required to provide an unordered list of user messages. In the follow-up works, this problem has also been called \textit{unsourced random access}~\citep{vem2017user,amalladinne2018coupled,fengler2019massive}. Another important aspect of this new formulation is the notion of per-user probability of error (PUPE) which is defined as the average (over the active users) fraction of the transmitted messages that are misdecoded. (Recall that classical definition declares error even if any one of the messages decoded incorrectly.)



In a quest towards low-complexity schemes achieving FBL bounds above, a scheme based on concatenated codes (with an
inner binary linear code and an outer BCH codes) in conjunction with a protocol called T-fold ALOHA was considered in
\citep{ordentlich2017low}. $T$--fold ALOHA is a modification of the standard slotted aloha protocol, in that up to $T$
collisions can be decoded in a slot. So, slotted ALOHA corresponds to $T=1$. The idea of $T$--fold ALOHA itself is not
new as the idea of employing multi-packet receivers to resolve small order collisions has reappeared
periodically~\citep{ghez1988stability} and more recently~\citep[Appendix A]{liva2011graph}. The gap between this low
complexity scheme in \citep{ordentlich2017low} and the FBL bound \citep{polyanskiy2017perspective} was reduced in
\citep{vem2017user} by employing a serial interference cancellation scheme on top of an interleaved LDPC code.
Achievability bounds for serial interference cancellation scheme (also known as irregular repetition slotted
ALOHA~\citep{liva2011graph}) were further improved in \citep{andreev-wcnc19}, where density evolution
method~\citep{liva2011graph} and a finite length random coding bound for the Gaussian
MAC~\citep{polyanskiy2017perspective} were combined. In \citep{10.1007/978-3-030-01168-0_15} the LDPC portion of
\citep{vem2017user} was improved by optimizing the protograph of LDPC code for Gaussian MAC using generalized PEXIT
charts. Further improvements were obtained in \citep{amalladinne2018coupled} by developing a compressive sensing based
algorithm. In~\citep{pradhan2019joint} the idea of sparsifying collisions, inherent in T-fold ALOHA, was modified by
randomizing (sparse) locations of the LDPC codeword symbols and by optimizing degree distributions via a suitable
approximation of a density evolution. 

Finally, we mention that there is another promising idea, proposed in 2001 by
Muller and Caire~\citep{caire2001optimal}, that uses non-orthogonal CDMA spreading coupled with an outer code. The key
idea is to demodulate CDMA by leveraging the soft information from the outer decoder (and alternate between the two).
In~\citep{caire2001optimal} authors observed a perfect multi-user cancellation effect, shown to exist also for the
fundamental limit in~\citep{ZPT-isit19}. It remains to explore whether this method is competitive for practically
relevant blocklengths.

Another set of works considers the problem of sending a (distributedly detected) alarm signal with high-reliability on
top of the regular low-rate update traffic, cf.~\citep{stern2019massive}. 

All of the references above focused on the AWGN RAC (or, equivalently, assumed perfect power control of the users'
transmissions equalizing received powers). In the presence of fading and MIMO, there have been various works on
algorithms for on/off activity detection \citep{chen2018sparse,liu2018sparse,liu2018massive} that use compressive
sensing ideas along with approximate message passing algorithm. (We note that the random-access problem can be seen as
on/off activity detection within a population of $2^k$ users, where $k$ is the message length. However, already a
moderate value of $k=100$ precludes the straightforward usage of activity detection protocols.) In
\citep{haghighatshoar2018improved}, scaling laws were derived for activity detection in a massive MIMO scenario. This
and the ideas from \citep{amalladinne2018coupled} have been used to develop a low complexity coding scheme in
\citep{fengler2019massive}. We also note here that our problem can be understood as a sparse support recovery 
in the compressed sensing literature
\citep{wainwright2009information,reeves2012sampling,reeves2009note,reeves2008sparse}. Theoretical investigations in that
literature predominantly consider iid Gaussian codebooks. In particular,
in \citep{reeves2012sampling}, the authors analyze various estimators like maximum likelihood (ML) and linear
estimators like matched filter (MF) and linear minimum mean squared error (LMMSE) but in an asymptotic setting similar
to a many-user MAC \citep{chen2017capacity,polyanskiy2017perspective,ZPT-isit19,kowshik2019fundamental,kowshik2019quasi}
where the number of active users scales linearly in blocklength.

The structure and main contributions of this paper are:
\begin{itemize}
\item In Section~\ref{sec:sys} we formally define the problem of unsourced frame synchronized single antenna
quasi-static Rayleigh fading RAC under per-user error (PUPE). We assume that the channel realizations are not known to the
receiver or the transmitters.
\item A $T$-fold ALOHA access method from~\citep{ordentlich2017low} is reviewed in Section~\ref{sec:fbl}. There are two
ways we apply $T$-fold ALOHA in this paper. One is to get a random-coding (non-constructive) achievability bounds, this
is done in Appendix~\ref{app:achiev}. Another is to use it as part of the explicit construction, which we do in
Section~\ref{sec:ldpc}.
\item A converse (lower) bound on energy-per-bit required for any random-access codes is developed in
Section~\ref{sec:conv}.
\item The random coding achievability and converse bounds are evaluated in the asymptotic setting in
Section~\ref{sec:asymp}.
\item In Section~\ref{sec:ldpc} we develop a low-complexity iterative multi-user decoding scheme based on LDPC codes \citep{Gallager63ldpc,
tanner1981recursive, richardson2008modern} and a belief propagation decoder on a joint Tanner graph.
\item In Section\ref{sec:res} we numerically compare various bounds in the finite-blocklength setting. It is found that
our practical scheme is rather competitive compared to both our own finite-blocklength bounds and asymptotic benchmarks.
\item Section~\ref{sec:conclude} finishes with some future directions.
\end{itemize}

\subsection{Notation}
Let $\mathbb{N}$ denote the set of natural numbers. For $n\in\mathbb{N}$, let $\mathbb{C}^n$ denote the $n$--dimensional complex Euclidean space. Let $S\subset \mathbb{C}^n$. We denote the projection operator or matrix on to the subspace \textit{spanned} by $S$ as $P_{S}$ and its orthogonal complement as $P^{\perp}_{S}$. For $0\leq p\leq 1$, let $h_2(p)=-p\log_2(p)-(1-p)\log_2(1-p)$ and $h(p)=-p\ln(p)-(1-p)\ln(1-p)$, with $0\ln 0$ defined to be $0$. We denote by $\mathcal{N}(0,1)$ and $\cn(0,1)$ the standard normal and the standard circularly symmetric complex normal distributions, respectively. $\mathbb{P}$ and $\mathbb{E}$ denote probability measure and expectation operator respectively. For $n\in \mathbb{N}$, let $[n]=\{1,2,...,n\}$. Lastly, $\norm{\cdot}$ represents the standard euclidean norm.
\section{System Model}
\label{sec:sys}

We follow the definition of a code from \citep{polyanskiy2017perspective}. Fix an integer $K_a\geq 1$ -- the number of active users. Let $\{P_{Y^n|X^n}=P_{Y^n|X_1^n,X_2^n,...,X_{K_a}^n}:\times_{i=1}^{K_a} \mathcal{X}_i^n\to \mathcal{Y}^n\}_{n=1}^{\infty}$ be a multiple access channel (MAC), which is also permutation invariant: for any permutation $\pi$ on $[K_a]$, the distribution $P_{Y^n|X_1^n,...,X_{K_a}^n}(\cdot|x_1^n,...,x_{K_a}^n)$ coincides with $P_{Y^n|X_1^n,...,X_{K_a}^n}(\cdot|x_{\pi(1)}^n,...,x_{\pi(K_a)}^n)$. We also call this a random access channel (RAC).

\begin{define}[\citep{polyanskiy2017perspective}]
\label{def:1}
An $(M,n,\epsilon)$ random-access code for the $K_a$ user MAC $P_{Y^n|X^n}$ is a pair of (possibly randomized) maps $f:[M]\to \mathcal{X}^n$ (the encoder) and $g:\mathcal{Y}^n\to\binom{[M]}{K_a}$ such that if $W_1,...,W_{K_a}$ are chosen independently and uniformly from $[M]$ and $X_j=f(W_j)$ then the average (per-user) probability of error satisfies
\begin{equation}
\label{eq:def1}
P_{e}=\frac{1}{K_a}\sum_{j=1}^{K_a}\Pb{E_j}\leq \epsilon
\end{equation}
where $E_j\triangleq\{W_j\notin g(Y^n)\}\cup \{W_j= W_i \text{ for some } i\neq j\}$ and $Y^n$ is the channel output.
\end{define}

So, all users use the same codebook, and the receiver outputs a list of $K_a$ codewords. Further, the probability of error is the average fraction of incorrectly decoded codewords. 
In the remainder of the paper we particularly focus on the single antenna quasi-static fading MAC:

\begin{equation}
\label{eq:sys1}
Y^n=\sum_{i=1}^{K_a}H_i X_i^n+Z^n
\end{equation}
where $Z^n\distas{}\cn(0,I_n)$, and $H_i\distas{iid}\cn(0,1)$ are the fading coefficients which are independent of
$\{X^n_i\}$ and $Z^n$. 
Consequently, we require each codeword produced by the encoder $f$ to satisfy a maximum power constraint:
\begin{equation}
\label{eq:pow}
\norm{f(w)}^2\leq nP\,, \qquad \forall w \in [M]\,.
\end{equation}

We emphasize that there can be potentially an unbounded number of users, but only $K_a$ of them are active. If each user has a message of size $k$ and transmits at power $P$ per symbol, then the energy-per-bit is given by $E_b/N_0=\frac{nP}{k}$.

In the rest of the paper we drop the superscript $n$ unless it is unclear.

\section{Random-access via $T$-fold ALOHA}
\label{sec:fbl}
In this section, we discuss our main achievability bound based on $T$--fold ALOHA protocol \citep{ordentlich2017low}. 
The idea is the following. Let $T,n_1 \in \mathbb{N}$ such that $T<K_a$ and $n_1<n$. The time slot or frame of length
$n$ is partitioned into $L=n/n_1$ subframes of length $n_1$. The common codebook is of blocklength $n_1$ and thus may
use a larger power $LP$ per degree of freedom. Each user
chooses a slot to send his message uniformly at random independently of other users. If there are $r$ users placing
their codewords in a particular $n_1$-slot, then the law of observations $Y^{n_1}$ and messages $W_1,\ldots,W_r$ in this
slot is given by
\begin{equation}\label{eq:ch_n1}
	Y^{n-1} = \sum_{i=1}^r H_i f(W_i) + Z^{n_1}\,, \qquad W_i \stackrel{iid}{\sim}\mathrm{Unif}[M]\,.
\end{equation}

Suppose there is a code such that
if there are at most $T$ users transmitting in a given block, then with good reliability decoder can estimate all $\le T$
messages, while if $>T$ users were transmitting then no guarantees on the decoder performance are made. For $T=1$ this
corresponds to the usual ``collision model'' prevalent in the analysis of the ALOHA. (Thus $T>1$ serves to partially
address the more realistic physical layer behavior.)
Intuitively, then, if the average number of users per slot, equal to $K_a/L$, is smaller than $T$, then with good
probability all users will be properly decoded.

More specifically, for a given common codebook $\calC \subset B(n_1, \sqrt{n_1 LP})$ inside an $\mathbb{C}^{n_1}$-ball of
radius $\sqrt{n_1 LP}$ and size $|\calC| = M$ we let $P_{e,\text{genie}}(\calC, r)$ denote the following quantity:
$$ P_{e,\text{genie}}(\calC, r) = {1\over r} \sum_{i=1}^r \Pb{W_i \not\in \mathcal{L}(Y^{n-1},r)}\,,$$
where $\mathcal{L}$ is the decoded list of messages. The subindex ``genie'' denotes
the fact that the decoder is aware of the exact number of users active in a slot. Given this genie side-information we
can show that the $T$-fold ALOHA access scheme then attains the overall PUPE for all of $K_a$ users bounded by
$$ 
\epsilon_{T, \text{genie}}(\calC)\triangleq 1-\sum_{r=1}^{T}(1-P_{e,\text{genie}}(\calC,r))\binom{K_a-1}{r-1}\left(\frac{1}{L}\right)^{r-1}\left(1-
\frac{1}{L}\right)^{K_a-r} + {K_a\over M}\,. $$
To get this bound, we first bound the probability that the $i$-th user's message is in collision: $\Pb{\exists j\neq i:
W_j = W_i} \le {K_a -1 \over M}$. Next, we note that the $i$-th user's slot will have $r-1$ other users with probability
$\binom{K_a-1}{r-1}\left(\frac{1}{L}\right)^{r-1}\left(1-\frac{1}{L}\right)^{K_a-r}$. Note that the resulting bound is
monotonically improving with increasing $T$.

\begin{remark} 
\label{remark:1} 
We will use the genie bound for our random-coding constructions and upper bound $P_{e,\text{genie}}$
via~\eqref{eq:Ka2a} in appendix \ref{app:achiev}. Note that the genie assumption prevents the above from being a true
achievability bound. In the AWGN (non-fading)
setting the number of users can be reliably estimated by simply measuring the total received energy in each $n_1$-long
slot, cf.~\citep{ordentlich2017low}. However, in the presence of fading this detector is a lot less reliable.
Consequently, our genie-based bound strictly speaking is only an optimistic estimate of the performance achievable
within a $T$-fold ALOHA scheme by the best possible component subcode.
\end{remark}

To get the true (genie-free) bounds, we are going to use an explicit (LDPC-based) code inside each $n_1$-slot. Our 
decoder automatically detects the number of users in a slot and estimates the messages. To evaluate the performance we
need to define two parameters corresponding to the $n_1$-code $\calC$. Namely, we define $P_e(\calC, r)$ and
$Q_e(\calC,r)$ as follows. Consider the setting of~\eqref{eq:ch_n1}. Fix some decoder (unaware of the number $r$) which
outputs a variable-length list $\mathcal{L} = \mathcal{L}(Y^{n_1}) \subset [M]$. 
We define
\begin{align*}
	P_e(\calC, r) &= {1\over r} \sum_{i=1}^r \Pb{W_i \not \in \mathcal{L}}\,,\\
	Q_e(\calC,r) &= \Pb{ |\mathcal{L}| > r}\,.
\end{align*}

With this definition we get the following bound on the overall PUPE (for all of $K_a$ users):
\begin{IEEEeqnarray*}{LL} 
\label{eq:Ka2b}
\epsilon_{T}(\calC)\triangleq 1-\sum_{r=1}^{T}(1-P_e(\calC,r))\binom{K_a-1}{r-1}\left(\frac{1}{L}\right)^{r-1}\left(1-
\frac{1}{L}\right)^{K_a-r} + {K_a \over M} + q\,,
\IEEEyesnumber
\end{IEEEeqnarray*}
where
$$ q = L \sum_{r=0}^{K_a} {K_a \choose r} L^{-r} (1-{1\over L})^{K_a-r} Q_e(\calC,r)$$
is an upper bound on  $\Pb{\cup_{j=1}^L F_j}$, where $F_j$ is the event that the $j$-th slot's decoded list has size
strictly bigger than the number $r$ of users active in that slot. Note that if the decoder never outputs a list of size
$>T$ then $Q_e(\calC,r)=0$ for all $r\ge T$. In our simulations, we have $Q_e(\calC,r)\approx 0$ (within accuracy of the
Monte Carlo) for all $r\ge 0$. In other words, our decoder does not ever overestimate the number of active users.

\section{Converse bound}
\label{sec:conv}
In this section we describe a simple converse bound based on results from \citep{yang2013quasi} and the meta-converse from \citep{polyanskiy2010channel}. 

\begin{theorem}
\label{th:con_main1}
Let 
\begin{IEEEeqnarray}{LL}
\label{eq:con14}
L_n=n\log(1+PG)+\sum_{i=1}^{n}\left(1-|\sqrt{PG}Z_i-\sqrt{1+PG}|^2\right)\IEEEeqnarraynumspace\\
S_n=n\log(1+PG)+\sum_{i=1}^n \left(1-\frac{|\sqrt{PG}Z_i-1|^2}{1+PG}\right)
\end{IEEEeqnarray}
where $G=|H|^2$ and $Z_i\distas{iid}\cn(0,1)$. Then for every $n$ and $0<\epsilon<1$, any $(M,n-1,\epsilon)$ code for the quasi-static $K_a$ MAC satisfies
\begin{IEEEeqnarray}{LL}
\label{eq:con15}
\log(M)\leq \log(K_a)+\log\frac{1}{\Pb{L_n\geq n\gamma_n}}
\end{IEEEeqnarray}
where  $\gamma_n$ is the solution of 
\begin{IEEEeqnarray}{LL}
\label{eq:con16}
\Pb{S_n\leq n\gamma_n}=\epsilon.
\end{IEEEeqnarray}
\end{theorem}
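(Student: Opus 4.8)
The plan is to reduce the multi-user random-access code to a single-user channel with side information and then apply the meta-converse of \citep{polyanskiy2010channel} specialized to the quasi-static fading channel as in \citep{yang2013quasi}. First I would observe that in Definition~\ref{def:1} the decoder outputs a list of size $K_a$, and the per-user error is an average; by a standard symmetrization/averaging argument, an $(M,n-1,\epsilon)$ random-access code for the $K_a$-user MAC yields, for a \emph{single} tagged user, a code for the point-to-point channel $Y^{n-1} = H X^{n-1} + Z^{n-1} + (\text{interference})$ whose probability of error (in the sense that the tagged message is not on the list) is at most $\epsilon$. Since the tagged user is entitled to the help of a list of size $K_a$, a list-decoding argument lets us pay only an additive $\log K_a$ in the rate: effectively we get a single-user code on $n-1$ symbols with $M/K_a$ messages (up to the $\log K_a$ slack that appears on the right side of \eqref{eq:con15}). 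This is where the $\log(K_a)$ term comes from, and it is why the statement is about an $(M,n-1,\epsilon)$ code rather than an $(M,n,\epsilon)$ one — one degree of freedom is spent somewhere (likely to handle the unknown phase/energy or to make the auxiliary channel analysis clean).

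Next I would invoke the meta-converse: for any auxiliary (single-user) output distribution $Q_{Y}$,
\[
\log \frac{M}{K_a} \le -\log \beta_{1-\epsilon}(P_{XY}, P_X Q_Y),
\]
and then lower-bound the $\beta$-function using the Neyman--Pearson lemma. The natural choice of input distribution is the one achieving the power constraint — a shell/Gaussian input with $\|X\|^2 = (n-1)P$ — and the natural auxiliary output is $Q_Y = \cn(0, (1+PG) I)$ \emph{conditioned on} the realization $G = |H|^2$, exactly as in the quasi-static analysis of \citep{yang2013quasi}. With these choices the log-likelihood ratio between $P_{Y|X,H}$ and $Q_{Y|H}$ under the true distribution becomes (a variant of) $L_n$, and under the auxiliary distribution becomes $S_n$: the two displays \eqref{eq:con14} are precisely these two information densities, with the $\cn(0,1)$ noise $Z_i$ playing the role of the (normalized) channel output in the two hypotheses. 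Then the threshold $\gamma_n$ chosen via $\Pb{S_n \le n\gamma_n} = \epsilon$ makes the Neyman--Pearson test have the right false-alarm probability, and $\beta_{1-\epsilon} \ge \Pb{L_n \ge n\gamma_n}$ follows, yielding \eqref{eq:con15} after taking $\log$ and moving $\log K_a$ to the right.

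The step I expect to be the main obstacle — and the one requiring the most care — is handling the \emph{unknown fading and the interference} rigorously within the meta-converse. Concretely: (i) justifying that one may condition on $G=|H|^2$ and then integrate, i.e.\ that the genie-aided channel (decoder knows $H$) is easier, so a converse for it is still a converse for the real system; (ii) dealing with the residual multi-user interference when isolating the tagged user — the cleanest route is to \emph{also} give the genie all other users' codewords and channels, which only helps the decoder and turns the problem into a genuine single-user fading channel, so no loss in the converse direction; and (iii) verifying the power/shell-input technicalities (the input is on the sphere of radius $\sqrt{(n-1)P}$, not Gaussian, so one must either work directly with the spherically-symmetric input as in \citep{yang2013quasi} or argue the Gaussian relaxation is harmless). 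Once these reductions are in place, the remainder is the bookkeeping of computing the two information densities and matching them to \eqref{eq:con14}, plus the list-decoding $\log K_a$ accounting; those are routine. I would therefore structure the write-up as: (1) genie reduction to single-user fading with known $H$ and known interference; (2) list-to-ordinary decoding, paying $\log K_a$; (3) meta-converse with the stated $P_X$ and $Q_Y$; (4) Neyman--Pearson evaluation giving $L_n$, $S_n$, $\gamma_n$.
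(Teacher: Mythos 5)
Your proposal follows essentially the same route as the paper: a genie reduction (reveal all other users' codewords and fading coefficients, and the tagged user's own $H$) to a single-user quasi-static fading channel, a list-decoding meta-converse that replaces $\log M$ by $\log(M/K_a)$, and then the evaluation from \citep{yang2013quasi} with the capacity-achieving output distribution conditioned on $G=|H|^2$, which produces exactly $L_n$, $S_n$, and $\gamma_n$; the shell-input/power-constraint technicality you flag is handled in the paper by citing \citep[Lemma 39]{polyanskiy2010channel_ieee}. The argument is correct and matches the paper's proof.
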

\begin{proof}
Notice that the converse bound for the case where full CSI is available at the receiver (and/or transmitter) is a converse for the no-CSI case as well. Further, by symmetry, it is enough to get a lower bound on the probability that a particular user's message is not in the decoded list. Finally, we can assume that the decoder has the knowledge of the codewords of all other users. To formalize, let $Y$ be the received vector and let $L(Y)$ be the list of codewords output by the decoder (we use the list of codewords or messages interchangeably). The size of the list is $|L(Y)|\leq K_a$. Then we have the following implications:

\begin{IEEEeqnarray*}{LL}
\label{eq:con1}
&\frac{1}{K_a}\sum_{t=1}^{K_a}\Pb{X_t\notin L(Y)}\geq 1-\epsilon \\
\iff & \Pb{X_1\notin L(Y)}\geq 1-\epsilon \label{eq:con1a} \IEEEyesnumber\\ 
\impliedby & \Pb{X_1\notin L(Y,H_1)}\geq 1-\epsilon \label{eq:con1b}\IEEEyesnumber \\
\impliedby & \Pb{X_1\notin L(Y,H_{[K_a]},X_{[K_a]\setminus \{1\}})}\geq 1-\epsilon\label{eq:con1c}\IEEEyesnumber
\end{IEEEeqnarray*}
where \eqref{eq:con1b} and \eqref{eq:con1c} represents the case when the decoder has access to the fading realization of user 1 and interference from all other users respectively.

Now, given $H_{[K_a]}$ and $X_{[K_a]\setminus \{1\}}$ at the receiver, the channel is equivalent to
\begin{IEEEeqnarray*}{LL}
\label{eq:con2}
Y_1=H_1 X_1+Z
\end{IEEEeqnarray*}
where $H_1$ and $Z$ are same as before, the decoder outputs a list of messages $\hat{W}=L(Y_1,H_1)$ of size at most
$K_a$ and the probability of error is $\Pb{W_1\notin \hat{W}}$ where $W_1\distas{}unif{[M]}$ is the users message.
Observe that this is similar to the case dealt in \citep{yang2013quasi}, but the decoder is performing list decoding.
Using the meta converse variation for list decoding, e.g.~\citep[Proposition 3]{tan2014fixed}, we can modify the converse bound in \citep{yang2013quasi} that results in replacement of $\log M$ with $\log (M/K_a)$. We note that \citep[Lemma 39]{polyanskiy2010channel_ieee} holds here (this is used in the the converse bound of \citep{yang2013quasi}). Combining theses with implications \eqref{eq:con1a}, \eqref{eq:con1b} and \eqref{eq:con1c} we have the theorem.
\end{proof}
\section{Low-complexity iterative coding scheme}
\label{sec:ldpc}
In this section, we present a low-complexity iterative coding scheme based on LDPC codes, which allows one to decode user messages in a slot. 


Recall that the users utilize the same codebook. Let us denote it by $\mathcal{C}$ and explain how to construct it. We start with a binary $[n, k]$ LDPC codebook and replace each $0$ with  $+\sqrt{P}$ and each $1$ with $-\sqrt{P}$. Let us show the bit-wise MAP decoding rule for the $j$-th bit of the $i$-th user below
\begin{\Ieee}{LLL}
\label{eq:argmax_functional}
\hat{X}_{i,j} =  \arg \max\limits_{{X}_{i,j} \in \pm \sqrt{P}} \mathbb{E} \left[ \sum\limits_{\sim {X}_{i,j}} p_{Y|X}\left(Y \: | \: \sum\limits_{k=1}^{T} H_k X_k \right) \prod\limits_{k = 1}^{T} \mathbb{1}_{X_k \in \mathcal{C}} \right]  \Ieeen\\
\end{\Ieee}
where the expectation is taken over ${H_1, H_2, \ldots, H_{T}}$.
Following \citep{richardson2008modern}, the summation ``$\sim {X}_{i,j}$'' means that we sum over all positions in all user codewords, except ${X}_{i,j}$.

\subsection{Alternating BP-decoder general description}
The decoder aims to recover all the codewords based on the received vector $Y$. The decoder employs a low-complexity iterative belief propagation (BP) decoder that deals with a received soft information presented in a log-likelihood ratio (LLR) form. The decoding system can be represented as a graph (factor graph, \citep{kschischang2001factor}), which is shown in \Fig{fig:fg}.

\begin{figure}[ht]
\centering
\input{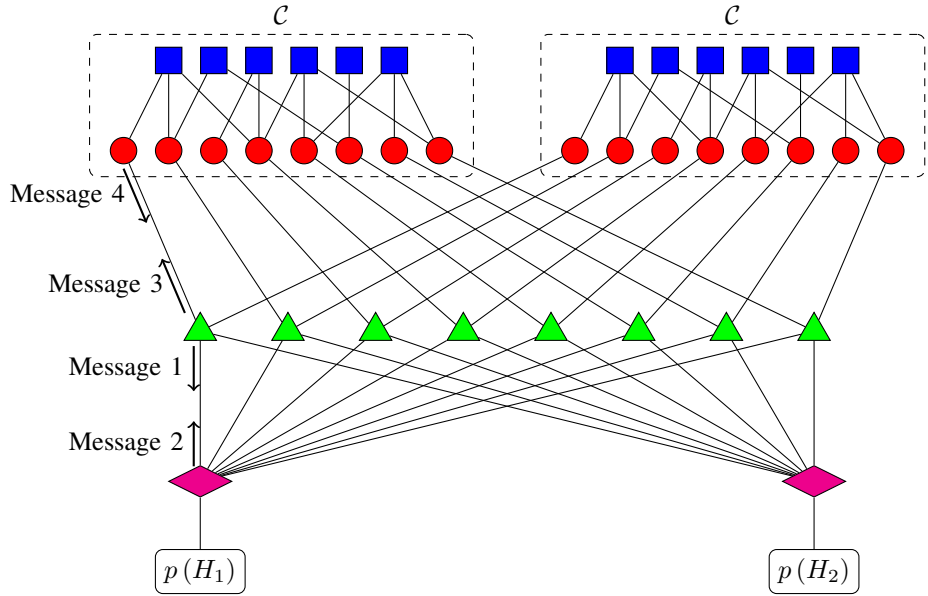}
\def\TotalNodeSpan{(\TannerHorizontalSpacing * (\NVariable - 1) + 2 * \TannerBoxSpacing)}
\def\FuncNodeSpan{0.8 * \TotalNodeSpan}
\def\FuncNodeMargin{0.1 * \TotalNodeSpan}
\def\FuncHorizontalSpacing{\FuncNodeSpan / (\NVariable - 1)}
\def\ArrowInclination{\FuncNodeMargin / \FuncTannerSpacing}
\def\FuncNodeY{-\TannerVerticalSpacing / 2-\FuncTannerSpacing}
\def\FadingNodeY{-\TannerVerticalSpacing / 2 - \FuncTannerSpacing - \FuncFadingSpacing}
\def\FadingPdfNodeY{-\TannerVerticalSpacing / 2 - \FuncTannerSpacing -1.6 *  \FuncFadingSpacing}
\tikzstyle{TannerBox} = [
    rectangle, draw, align=center, rounded corners=1mm, dashed,
    minimum height = {\TannerBoxHeight * (\TannerVerticalSpacing + 2 * \NodeDiameter)}, 
    minimum width =  {\TannerBoxWidth * (\TannerHorizontalSpacing * (\NVariable - 1) + 2 * \NodeDiameter)},
    ]
\tikzstyle{CheckNode} = [rectangle, draw, solid, rounded corners=0mm, fill=blue, minimum height=\NodeDiameter, minimum width=\NodeDiameter]
\tikzstyle{VarNode} = [circle, draw, solid, fill=red, minimum height=\NodeDiameter]
\tikzstyle{FuncNode} = [
regular polygon,
regular polygon sides=3,
draw,
scale=0.75,
minimum height=\NodeDiameter,
fill=green
]
\tikzstyle{FadingNode} = [diamond, aspect=2, draw, solid, rounded corners=0mm, fill=magenta, minimum width=2.4*\NodeDiameter, minimum height=1.2*\NodeDiameter]
\tikzstyle{FadingNodePdf} = [
    rectangle, draw, solid, align=center, rounded corners=1mm,
    minimum height = 1,
    minimum width =  1,
    ]
\tikzstyle{line} = [-,
    draw, solid
]
\def\Tanner {
\foreach \i in {1, ..., \NCheck}
{
    \node at ({(\i - (\NCheck + 1) / 2) * \TannerHorizontalSpacing + \TannerBoxShift}, {\TannerVerticalSpacing / 2}) [CheckNode] (c\i\graphname) {};
}
\foreach \i in {1, ..., \NVariable}
{
    \node at ({(\i - (\NVariable + 1) / 2) * \TannerHorizontalSpacing + \TannerBoxShift}, {-\TannerVerticalSpacing / 2}) [VarNode] (v\i\graphname) {};
}
\path [line] (c1\graphname) -- (v1\graphname);
\path [line] (c1\graphname) -- (v2\graphname);
\path [line] (c1\graphname) -- (v4\graphname);

\path [line] (c2\graphname) -- (v2\graphname);
\path [line] (c2\graphname) -- (v6\graphname);

\path [line] (c3\graphname) -- (v3\graphname);
\path [line] (c3\graphname) -- (v4\graphname);

\path [line] (c4\graphname) -- (v4\graphname);
\path [line] (c4\graphname) -- (v5\graphname);
\path [line] (c4\graphname) -- (v8\graphname);

\path [line] (c5\graphname) -- (v6\graphname);

\path [line] (c6\graphname) -- (v5\graphname);
\path [line] (c6\graphname) -- (v7\graphname);
\path [line] (c6\graphname) -- (v8\graphname);
}
\begin{tikzpicture}[->,
    align=center,
    xscale=\MainScaleX,
    yscale=\MainScaleY,
]
\node at (-\TannerBoxSpacing, 0) [TannerBox] (TannerLeft) {};
\node at (\TannerBoxSpacing, 0) [TannerBox] (TannerLeft) {};
\node [above] at (-\TannerBoxSpacing, 0.8 * \TannerVerticalSpacing) {$\mathcal{C}$};
\node [above] at (+\TannerBoxSpacing, 0.8 * \TannerVerticalSpacing) {$\mathcal{C}$};
{
\def\TannerBoxShift{-\TannerBoxSpacing}
\def\graphname{G1}
\Tanner
}
{
\def\TannerBoxShift{+\TannerBoxSpacing}
\def\graphname{G2}
\Tanner
}
\foreach \i in {1, ..., \NVariable}
{
    \node at ({(\i - (\NVariable + 1) / 2))*\FuncHorizontalSpacing}, \FuncNodeY) [FuncNode] (func\i) {};
}
\foreach \i in {1, ..., \NVariable}
{
    \def\graphname{G1}
    \path [line] (v\i\graphname) -- (func\i);
    \def\graphname{G2}
    \path [line] (v\i\graphname) -- (func\i);
}
\node at ({-\FuncNodeSpan / 2}, \FadingNodeY) [FadingNode] (fading1) {};
\node at ({+\FuncNodeSpan / 2}, \FadingNodeY) [FadingNode] (fading2) {};

\foreach \i in {1, ..., \NVariable}
{
    \path [line] (fading1) -- (func\i);
    \path [line] (fading2) -- (func\i);
}
\node at ({-\FuncNodeSpan / 2}, \FadingPdfNodeY) [FadingNodePdf] (fadingPdf1) {$p\left(H_1\right)$};
\node at ({+\FuncNodeSpan / 2}, \FadingPdfNodeY) [FadingNodePdf] (fadingPdf2) {$p\left(H_2\right)$};
\path [line] (fading1) -- (fadingPdf1);
\path [line] (fading2) -- (fadingPdf2);

\draw [thick]
({-\TotalNodeSpan / 2}, {-\TannerVerticalSpacing / 2 - \MessageArrowMargin * \FuncTannerSpacing})
--
+({\ArrowInclination * \MessageArrowLength * \FuncTannerSpacing}, {-\MessageArrowLength * \FuncTannerSpacing})
node [midway, left={\MessageArrowMargin*\TannerHorizontalSpacing}] (TextNode) {Message 4};

\draw [thick]
({-\FuncNodeSpan / 2- 2 * \MessageArrowMargin *  \FuncTannerSpacing * \ArrowInclination}, {\FuncNodeY +\MessageArrowMargin * \FuncTannerSpacing})
--
+({-\ArrowInclination * \MessageArrowLength * \FuncTannerSpacing}, {\MessageArrowLength * \FuncTannerSpacing})
node [midway, left=0.1*\TannerHorizontalSpacing](TextNode) {Message 3};

\draw [thick]
({-\FuncNodeSpan / 2 - 2 * \MessageArrowMargin * \FuncTannerSpacing*\ArrowInclination*\ArrowInclination}, {\FuncNodeY - \MessageArrowMargin * \FuncFadingSpacing})
--
+(0, {-\MessageArrowLength * \FuncFadingSpacing})
node [midway, left=0.1*\TannerHorizontalSpacing](TextNode) {Message 1};

\draw [thick]
({-\FuncNodeSpan / 2 - 2 * \MessageArrowMargin * \FuncTannerSpacing*\ArrowInclination*\ArrowInclination},{\FadingNodeY+ \MessageArrowMargin * \FuncFadingSpacing})
--
+(0, {\MessageArrowLength * \FuncFadingSpacing})
node [midway, left=0.1*\TannerHorizontalSpacing](TextNode) {Message 2};
\end{tikzpicture}
\caption{Iterative  joint  decoding  algorithm  (alternating  BP-decoder), factor graph}
\label{fig:fg}
\end{figure}
There are four types of nodes in the graph. User LDPC codes are presented with the use of Tanner graphs with variable (red color) and check nodes (blue color). At the same time, there is a third kind of nodes in the figure -- functional nodes (green color). These nodes correspond to the elements of the received vector $Y$. The fourth kind of nodes (magenta nodes) corresponds to fading coefficients. We note that the decoder also performs an estimation of fading coefficients (latent variables).   

The decoding algorithm is based on the iterative message passing procedure. There are two types of iterations in our system: inner iterations, which are used for LDPC code decoding and outer iterations used for fading coefficients estimation. In what follows we mean an outer iteration in all the cases where the type of iteration is not specified. The user codewords are decoded in a sequential manner. Let us consider a single user decoding. This process consists of the calculation and passing of four message types (see \Fig{fig:fg}). We note that both fading coefficients and LLRs for other users remain fixed during this process. Every message is described in details below:

\paragraph{Message type $1$ (from functional nodes to fading nodes)} Without loss of generality let us consider the first functional node. Assume we received a symbol $y$. By $x_i = X_{i,1} \in \{ +\sqrt{P}, -\sqrt{P}\}$, $i = 1, \ldots, T$, we denote symbols sent by the users. Let us show how to calculate a posterior probability density function (pdf) of $H_1$ from the first functional node. We denote this message by $R_1^{(1)}$ and calculate it as follows
\begin{\Ieee}{LLL}
\label{eq:update_h_R}
R_1^{(1)}(h_1) \propto {\mathbb{E}\left[\sum\limits_{x_1, x_2, \ldots x_T} p(y |\sum\limits_{j=1}^{T} H_j x_j)\prod\limits_{j=2}^{T}  \Pr(x_j)\right]}, \Ieeen
\end{\Ieee}
where the expectations are taken over $H_2, \ldots, H_{T}$. Such updates are calculated at every functional node and denoted by $R_1^{(i)}$, $i=1, \ldots, n$.

\paragraph{Message type $2$ (from fading nodes to functional nodes)} We denote the message from $j$-th fading node to $i$-th functional node by $Q_j^{(i)}$, this message is a pdf. To find it we need to calculate the product of incoming messages. Let us consider a message from the first fading to the first functional node, we have
\begin{equation}
\label{eq:update_h_Q}
Q_1^{(1)}(h_1) = \prod\limits_{i=1}^{n} R_1^{(i)}(h_1),
\end{equation}

\begin{remark}
In a conventional message passing algorithm, the outgoing message is calculated based on messages which come through all the edges except its own edge. But here to reduce the complexity we approximate the complicated message update at fading nodes via the product of a few randomly selected incoming messages.
\end{remark}

\paragraph{Message type $3$ (from functional nodes to LDPC codes)}
Let us note, that a posterior LLR for $x_1$ can be calculated as follows
\begin{equation}
\label{eq:decode_func_nodes}
L(x_1) = \log \frac{
\mathbb{E}\left[
    \sum\limits_{x_1 = +\sqrt{P}, x_2, \ldots x_T} p(y | \sum\limits_{j=1}^{T} H_j x_j)\prod\limits_{j=2}^{T} \Pr(x_j)
\right]}
{
\mathbb{E} \left[
    \sum\limits_{x_1 = -\sqrt{P}, x_2, \ldots x_T} p(y | \sum\limits_{j=1}^{T} H_j x_j)\prod\limits_{j=2}^{T} \Pr(x_j)
\right]},
\end{equation}
where the expectations are taken over ${H_1, H_2, \ldots, H_{T}}$ and 
$p(y|a) = \frac{1}{{\pi}} \exp(-(y-a)^2)$. Note that for practical implementation the Monte-Carlo sampling method can be used for expectations.

\paragraph{Message type 4 (LDPC decoding)} After functional nodes decoding one needs to update the LLR for a given user with LDPC iterative decoder. Each user utilizes a standard BP decoding algorithm (Sum-Product or Min-Sum, \citep{richardson2008modern}) to decode an LDPC code.

Now, let us present the final message passing decoding algorithm (see Algorithm~\ref{alg:joint_dec}).

\begin{algorithm}
\caption{Iterative decoding algorithm (alternating BP-decoder)}\label{alg:joint_dec}
\begin{algorithmic}[1]

\State initialize the LLR values of variable nodes for each user code with zero values assuming equal probability for $\sqrt{P}$ and $-\sqrt{P}$ values
\State initialize pdf of $H_i$, $i=1, \ldots, T$. For each coefficient we have pdf for both real and imaginary parts with prior distribution  $\mathcal{N}(0, 1/2)$ corresponding to Rayleigh fading.
\For {$i_O = 1, \ldots, I_{O}$} \Comment perform $I_{O}$ outer iterations
\For {$u = 1, \ldots, T$} \Comment decode users sequentially
\State Propagate message type 1, eq.~\eqref{eq:update_h_R} \Comment from functional nodes to fading nodes
\State Propagate message type 2, eq.~\eqref{eq:update_h_Q} \Comment from fading nodes to functional nodes
\State Sample fading coefficients for further expectation estimation at~\eqref{eq:decode_func_nodes} from the fading coefficients pdfs
\State Propagate message type 3 using sampled fading coefficients, eq.~\eqref{eq:decode_func_nodes} \Comment from functional nodes to LDPC codes
\State Propagate message type 4 \Comment perform $I_{I}$ inner iterations of BP decoder for $u$-th user LDPC code.
\EndFor
\EndFor
\end{algorithmic}
\end{algorithm}
Below the efficient implementation with Gaussian mixtures (GM) approximating the pdf of fading coefficients is discussed.
\subsection{Alternating BP-decoder implementation with Gaussian mixtures}
\label{seq:gm-decoder}
Alternating BP-decoder is based on a successive update of LLRs for every codeword and a successive update of the pdfs of fading coefficients $H_j$, $j=1, 2, \ldots, T$. To construct a practical implementation of this algorithm, one needs a tractable representation of probability density functions that
\begin{itemize}
    \item can be easily manipulated during convolution and multiplication procedures,
    \item retain their form after such kind of transformations through multiple iterations.
\end{itemize}
The simplest form of pdf approximation that satisfies the listed above requirements is the GM model:
\begin{equation}
\label{eq:gm}
\pi\left(\cdot\right)=\sum_{l=1}^{\nu}{\omega_l\mathcal{N}\left(\mu_l, \sigma_l^2\right)}, \quad \sum_{l=1}^{\nu} \omega_l = 1,
\end{equation}
where $\pi$ is the pdf approximation and $\mathcal{N}\left(\mu, \sigma^2\right)$ is the Gaussian pdf with the mean $\mu$ and variance $\sigma^2$. The sum of two random variables having the pdf in the form of~(\ref{eq:gm}) remains a GM. We also note that GM is a conjugate prior with respect to itself, which helps to construct the pdf of $H_j$ given the pdf of~(\ref{eq:update_h_R}) at each functional node.

Now let us specify how the Algorithm~\ref{alg:joint_dec} can be implemented with the use of GMs and describe the steps of every outer iteration. Without loss of generality suppose that the user~$1$ is being decoded.

The first step of outer iteration is to update the fading coefficient for a given user at every functional node (see eq.~(\ref{eq:update_h_R})). This can be done as follows. Rewrite eq.~(\ref{eq:update_h_R}) via GMs. Let us consider the $i$-th functional node. Note that
\begin{equation}
\label{eq:update_h_R_expansion}
H_1^{(i)}x_1^{i} = y_i - \left(\sum\limits_{j=2}^{T}H_j^{(i)} x_j^{(i)} + z\right).
\end{equation}

Given the LLR of some bit $x_j^{(i)}$ and the GM representing the coefficient
$$
H_j^{(i)}\sim \sum_{l=1}^{\nu}{\omega_l\mathcal{N}\left(\mu_l, \sigma_l^2\right)},
$$
the variable $H_j^{(i)} x_j^{(i)}$ will also be a GM in the following form
\begin{equation}
\label{eq:make_llr_prod}
H_j^{(i)} x_j^{(i)} \sim \sum_{l=1}^{\nu}{\omega_l P\left(x_j^{(i)} = +\sqrt{P}\right)\mathcal{N}\left(\sqrt{P}\mu_l, P\sigma_l^2\right)}
 + \sum_{l=1}^{\nu}{\omega_lP\left(x_j^{(i)} = -\sqrt{P}\right)\mathcal{N}\left(-\sqrt{P}\mu_l, P\sigma_l^2\right)}
\end{equation}

As soon as the sum of random variables has the pdf that equals to the convolution of every single pdf, the right-hand side of equation~(\ref{eq:update_h_R_expansion}) is a convolution of GMs. This procedure is straightforward, but the resulting GM component count grows as a product of component counts of every GM under the convolution. One can see, that the $y_i - \sum\limits_{j=2}^{T}H_j x_j^{(i)}$ is also a GM as $y_i$ is a constant. Also, note that this procedure is performed separately for both real and imaginary parts of the signal.

The final step is to construct the $H_1^{(i)}$ pdf given the GM on the right-hand side of eq.~(\ref{eq:update_h_R_expansion}) and the LLR for $x_1^{(i)}$. The coefficient $H_1^{(i)}$ has a GM pdf that can be calculated in exactly the same manner as in equation~(\ref{eq:make_llr_prod}). Suppose the RHS of ~\eqref{eq:update_h_R_expansion} has a pdf
$$
H_1^{(i)}x_1^{i} \sim \sum_{l=1}^{\nu}{\omega_l\mathcal{N}\left(\mu_l, \sigma_l^2\right)}.
$$
then the pdf of the coefficient $H_1^{(i)}$ can be calculated as follows
\begin{equation}
\label{eq:update_h_R_final}
H_1^{(i)}\sim \sum_{l=1}^{\nu}{
\omega_lP\left(x_1^{(i)} = +\sqrt{P}\right)\mathcal{N}\left(\frac{\mu_l}{\sqrt{P}}, \frac{\sigma_l^2}{P}\right)
}
+ \sum_{l=1}^{\nu}{
\omega_lP\left(x_1^{(i)} = -\sqrt{P}\right)\mathcal{N}\left(-\frac{\mu_l}{\sqrt{P}}, \frac{\sigma_l^2}{P}\right)
}.
\end{equation}

The second step of the outer iteration is to derive the fading coefficient estimate $H_1$ given the messages $H_1^{(i)}$ from every functional node~\eqref{eq:update_h_R_final}. This can be done by multiplying the corresponding pdfs (see eq.~(\ref{eq:update_h_Q})). Note that as in the case of convolution, the product of two GMs is also a GM with the number of components equal to the product of the number of components in the multipliers.

The next two steps in the outer iteration are sampling from GM and functional nodes decoding procedure (see eq.~\eqref{eq:decode_func_nodes}).

As it was mentioned before, the final step of the outer iteration is a simple iterative decoding algorithm, that just updates the user's codeword LLRs. The outer iterations are performed over every user successively until the maximum iteration count per user is reached.

\subsection{Gaussian mixture pruning and components merging}
The decoding algorithm performs the convolution and multiplication of multiple GMs at every iteration. In this subsection, these procedures are described in more detail as well as the approach to limiting the ever-growing number of components in the final GM is presented.

The convolution of $GM_{1}\otimes GM_{2}$ with 
$$
GM_{1} = \sum_{l_1=1}^{\nu_1}{\omega_{l_1}\mathcal{N}\left(\mu_{l_1}, \sigma_{l_1}^2\right)}, \quad
GM_{2} = \sum_{l_2=1}^{\nu_2}{\omega_{l_2}\mathcal{N}\left(\mu_{l_2}, \sigma_{l_2}^2\right)}.
$$
is the GM that has $\nu_1\times \nu_2$ components:
\begin{\Ieee}{LLL}
\label{eq:gm_convolution}
GM_{1}\otimes GM_{2} = \sum_{l_1=1}^{\nu_1}{\sum_{l_2=1}^{\nu_2}{\omega_{l_1}\omega_{l_2}\mathcal{N}\left(\mu_{l_1}+\mu_{l_2}, \sigma_{l_1}^2+\sigma_{l_2}^2\right)}}.
\end{\Ieee}
The GM product is given by \eqref{eq:gm_product}. The result has also $\nu_1\times\nu_2$ components.
 
\begin{equation}
\label{eq:gm_product}
GM_{1}\times GM_{2} = \sum_{l_1=1}^{\nu_1}\sum_{l_2=1}^{\nu_2}
\frac{\omega_{l_1}\omega_{l_2}}{\sqrt{2\pi \left(\sigma_{l_1}^2+\sigma_{l_2}^2\right)}}\exp{\left\{-\frac{\left(\mu_{l_1}-\mu_{l_2}\right)^2}{2\left(\sigma_{l_1}^2+\sigma_{l_2}^2\right)}\right\}}
\mathcal{N}\left(
\frac{\sigma_{l_1}^2\sigma_{l_2}^2}{\sigma_{l_1}^2+\sigma_{l_2}^2}\left(
\frac{\mu_{l_1}}{\sigma_{l_1}^2}+\frac{\mu_{l_2}}{\sigma_{l_2}^2}
\right),
\frac{\sigma_{l_1}^2\sigma_{l_2}^2}{\sigma_{l_1}^2+\sigma_{l_2}^2}
\right)
\end{equation}
 
Note, that in practical implementation it is better to manipulate with the logarithm of the Gaussian component weight for numerical stability. The component count optimization procedures are described below and include merge and prune steps.
\subsubsection{Gaussian mixtures pruning}
One can see that both GM convolution and product significantly increase the number of components. For practical implementation, one needs to limit the number of Gaussian components. The first step consists of removing the components with low weights (pruning). This can be easily done by sorting the weights in ascent order and removing several first components whose cumulative weight is less than some threshold.
\subsubsection{Gaussian mixtures components merge}
The GM components which are ``close'' to each other (with the distance measure specified below) must be merged. This approach is described in details in \citep{Clark2006}. The procedure starts from the ``heaviest'' component. All other components that have the distance less than some threshold form a merge-list. This distance can be calculated as follows
$$
d = \frac{\left(\mu_1 - \mu_2\right)^2}{\sigma_1^2}\leq d_{min},
$$
where component 1 has a higher weight than component 2. After the merge list of length $\nu_0$ has been constructed, all the components from this list are replaced by a new component $\omega\mathcal{N}\left(\mu, \sigma^2\right)$ with the following parameters:
$$
\omega = \sum_{l=1}^{\nu_0}\omega_l, \quad
\mu = \sum_{l=1}^{\nu_0}c_l\mu_l,\quad
\sigma^2 = \frac{1}{\omega}\sum\limits_{l=1}^{\nu_0}\mu_l\left(\sigma_l^2 + \left(\mu_l - \mu\right)^2\right)
$$
Note that each component can be merged with any other only once during the GM-merge procedure.

The final step of GM pruning is to apply a hard limit on the maximum components count. This is done for performance stability and helps to control the maximum GM length.

\subsection{Blind detection and error floor}
\label{sec:blind_decoder}
As soon as the iterative decoder operates as an optimization task and this optimization procedure is split between two groups of variables (user LLRs and fading coefficients), one can expect this algorithm to converge to some local maximum of~\eqref{eq:argmax_functional}. Convergence to a local maximum can be a source of the error floor. To overcome the error floor problem one can start the decoding algorithm multiple times and handle functional nodes in random order at every decoding iteration. As soon as GMs are merged and pruned, this provides some source of randomness and pushes the decoding procedure to possibly different local maximums. This approach has eliminated the error floor problem and allowed another opportunity -- a blind detection. Given the multiple decoding attempts, one can select a set of unique codewords that were successfully decoded. Every attempt can detect different codewords. The final output of the decoder is the union of such sets. Without loss of generality, this approach can be applied to the case of unknown user count. As further numerical experiments (see appendix~\ref{app:slot}) show, this approach is a promising one.

\begin{remark}
Even though the number of users in a slot is unknown we never faced with a false alarm problem in our simulations. By false alarm, we mean a situation in which the output list contains codewords that were not transmitted. To explain this fact we note that LDPC codes have a large area of inputs for which they report a failure (the decoder cannot converge to a codeword). Thus, we mention once again that $Q_e(\calC,r)\approx 0$ (within the accuracy of the Monte Carlo) for all $r\ge 0$.
\end{remark}

\begin{remark}
The approach presented in this paper is similar to the approach from \citep{Caire2001}. Nevertheless, the main differences are: a) we consider same codebook case and changed the parallel schedule with serial schedule in order to break symmetry, b) we show that this approach allows to efficiently perform blind user decoding, i.e. determine the number of active users in a slot and recover their messages, c) we suggest an approach how to deal with the error floor caused by the inaccuracy in the estimation of fading coefficients ($H_i$, $i=1, \ldots, T$).
\end{remark}
\section{Numerical results and discussion}
\label{sec:res}
In this section we present the plots of the minimum energy per bit required to achieve a probability of error $\epsilon=0.1$ as a function of $K_a$ for the channel \eqref{eq:sys1}. Figure \ref{fig:1} shows plots of various schemes. The parameters used for evaluation are blocklength $n=30000$ and message size $k=100$ bits. Next we describe how each of these curves was obtained.

For $T$-fold ALOHA using FBL bound, we use the bound for $p_t$ given in \eqref{eq:Ka2a}. For each $K_a$ we find the optimum $L$ (as an optimization over both $L$ and $P$) so that we minimize $E_b/N_0$ such that  the probability of error in \eqref{eq:Ka2b} is less than $0.1$. Since directly optimizing the bound is not easy, we approximate PUPE for the fading channel as \citep{polyanskiy2010channel_ieee}
\begin{\Ieee}{LLL}
\label{eq:Ka_N1}
P_e(M,n_1,r,LP)\approx \\
 \Ex{\mathcal{Q}\left( \frac{n_1 C_{AWGN}(LP\sum_{i=1}^{r}|H_i|^2)-\log_2 M}{\sqrt{n V_{AWGN}(LP\sum_{i=1}^{r}|H_{i}|^2)}}  \right)}\Ieeen
\end{\Ieee}
where $C_{AWGN}(x)=\log(1+x)$ and $V_{AWGN}(x)=1-\frac{1}{(1+x)^2}$ are the capacity and dispersion of a (complex) AWGN channel, respectively. We choose $L$ by using \eqref{eq:Ka_N1} in \eqref{eq:Ka2b}. Then we use the spherical codebook, i.e. codewords uniformly and independently sampled from the (complex) power shell in dimension $n_1=\lfloor n/L\rfloor$ to compute the probability of error according to \eqref{eq:Ka2b} where $P_e(M,n_1,r,LP)$ is computed using brute-force Monte-Carlo simulation of \eqref{eq:Ka2a} with the choice $K_1=K_2=r$. Since $r\leq T$ is small it would not make sense to drop a user. To this end, we produce $2000$ samples, from which we construct the kernel density approximation of the cumulative distributive function (CDF) of the statistic $\max_{\substack{S_0\subset [r]\\ |S_0|=t}}G(Y,S_0,c_{S_0},t)$ (given in \eqref{eq:Ka3aa})  for each $t\leq r$. Then this smooth approximation is used to optimize over $\delta$ in \eqref{eq:Ka2b}. 

For $T$-fold ALOHA using the iterative coding scheme, we have used $(n_1,k)$ LDPC codes with $k=100$ and blocklength $n_1\in\{200,400\}$. We note, that two codes are enough to cover the interval $1 \leq K_a \leq 250$. 
For each of these codes, we get PUPE vs $E_b/N_0$ curves and choose the best code (the best code requires the smallest $E_b/N_0$ in order to achieve $\text{PUPE} \leq \epsilon = 0.1$) for each value of $K_a$. The best waterfall curves for the different number of users are presented in \Fig{fig:2}. Iterative decoder used the multiple component Gaussian mixture model with parameters listed in Table~\ref{tab:sim_settings_gm}. Note again, that in LDPC-based scheme we perform honest blind slot decoding (without assuming the knowledge of user count in a slot).

It can be seen from \Fig{fig:1} that the performance of $T$--fold ALOHA for iterative decoding scheme is very close to that of $T$--fold ALOHA with random coding bounds for small $K_a$. The gap increases with $K_a$ because of our limited choices of LDPC codes, i.e. due to BPSK modulation, we are constrained by $n_1\geq k$. We refer to remark~\ref{remark:1} again to emphasize that the $T$-fold ALOHA with the FBL bound is not a true achievability bound since it assumes that the decoder has knowledge of the number of users in each slot or subframe.

We have also plotted the result of treat interference as noise (TIN) decoding. Here we have used optimistic capacity approximation for PUPE.
\begin{\Ieee}{LLL}
\label{eq:Ka_N2}
\epsilon\approx \Ex{\mathcal{Q}\left( \frac{n C_{AWGN}\left( \frac{  P|H_1|^2}{1+P\sum_{i=2}^{T}|H_{i}|^2}\right)-k}{\sqrt{n V_{AWGN}\left( \frac{  P|H_1|^2}{1+P\sum_{i=2}^{T}|H_{i}|^2}\right)}}  \right)}\Ieeen
\end{\Ieee}

It is easy to get an actual random coding bound for TIN similar to theorem \ref{th:Ka1}, but we don't expect it to be better than \eqref{eq:Ka_N2}.



Also plotted for reference is the Shamai-Bettesh capacity bound from \citep{bettesh2000outages}. It is an asymptotic bound ($n\to \infty$) for the probability of error per-user in the case of symmetric rate and large $K_a$. But, it doesn't assume same codebook. The idea is the following. The joint decoder knows the realization of fading coefficients and users are ranked according the strength of their fading coefficients. It first tries to decode all users. If it fails (i.e., the rate vector is not inside the instantaneous full capacity region), it drops the user with least fading coefficient and tries to decode the remaining $K_a-1$ users. The dropped user forms part of the noise. This process continues iteratively, and the fraction of users that were not decoded is precisely the outage/probability of error per-user. Since the case under discussion is for large $K_a$, the order statistics of the absolute value of fading coefficients crystallize (i.e., become almost non-random) and hence analytical expressions can be derived for outage in terms of spectral efficiency ($kK_a/n$) and total power. So for each $K_a$, we know our operating spectral efficiency and total power, and hence we can use the asymptotic bound to find the probability of error. Most importantly observe that even at $K_a=100$, the random coding based $4$--fold ALOHA performance is off from the capacity bound of \citep{bettesh2000outages} by just $3$~dB. 

The converse from \eqref{eq:con15} and \eqref{eq:con16} is also plotted. This is in essence a single user based converse bound. We can also derive a Fano type converse, but for the range of parameters we work with, it is worse than the presented one. The converse presented here illustrates the fact the $E_b/N_0$ requirements are necessarily higher compared to the AWGN channel in \citep{polyanskiy2017perspective}.

\begin{figure}[t]
\begin{center}
\input{tikz/ebno_ka_aloha/single_column.tex}
\begin{tikzpicture}[define rgb/.code={\definecolor{mycolor}{RGB}{#1}},
                    rgb color/.style={define rgb={#1},mycolor}]
\begin{axis}[
    width=\FigureWidth, height=\FigureHeight,
    xlabel={$K_a$},
    ylabel={${E_b}/{N_0}$, dB},
    ylabel style={yshift=-0.6mm},
    xmin=0,    xmax=250,
    ymin=7, ymax=22,
    legend cell align={left},
    legend style={at={\LegendPositioning}, anchor=south west, nodes={scale=\LegendScale, transform shape}},
    axis line style={latex-latex},
    grid=both,
    grid style={line width=.1pt, draw=gray!20},
    major grid style={line width=.2pt,draw=gray!50},
    tick align=inside,
    tickpos=left
]
\addplot[
    rgb color={80, 80, 180},
    mark=x,
    mark size = \MarkerSize,
    line width=\LineWidth,
    ]
table[x=KA,y=EBNO]{tikz/ebno_ka_aloha/data/aloha1_fbl.dat};
\addplot[
    color=red,
    mark=x,
    mark size = \MarkerSize,
    line width=\LineWidth,
    ]
table[x=KA,y=EBNO]{tikz/ebno_ka_aloha/data/aloha4_fbl.dat};
\addplot[
    rgb color={80, 80, 180},
    mark=+,
    mark size = \MarkerSize,
    line width=\LineWidth,
    ]
table[x=KA,y=EBNO]{tikz/ebno_ka_aloha/data/aloha1_ldpc.dat};
\addplot[
    color=red,
    mark=+,
    mark size = \MarkerSize,
    line width=\LineWidth,
    ]
table[x=KA,y=EBNO]{tikz/ebno_ka_aloha/data/aloha4_ldpc.dat};
\addplot[
    rgb color={80, 180, 80},
    mark=none,
    line width=\LineWidth,
    ]
table[x=KA,y=EBNO]{tikz/ebno_ka_aloha/data/tin.dat};
\addplot[
    color=black,
    dash pattern={on 4pt off 1pt},
    mark=none,
    line width=\LineWidth,
    ]
table[x=KA,y=EBNO]{tikz/ebno_ka_aloha/data/shamai_bettesh.dat};
\addplot[
    color=blue,
    mark=none,
    line width=\LineWidth,
    ]
table[x=KA,y=EBNO]{tikz/ebno_ka_aloha/data/converse.dat};
\addplot[
    color=magenta,
    dash pattern={on 4pt off 1pt},
    mark=none,
    line width=\LineWidth,
    ]
table[x=KA,y=EBNO]{tikz/ebno_ka_aloha/data/optimal_decoder.dat};
\legend{
1-ALOHA using genie+FBL bound,
4-ALOHA using genie+FBL bound,
1-ALOHA using LDPC scheme,
4-ALOHA using LDPC scheme,
Treat interference as noise (TIN),
Shamai-Bettesh capacity bound,
Converse,
Optimal Decoder (Replica method)
}
\end{axis}
\end{tikzpicture}
\ifhmode\ifnum\lastnodetype=11 \unskip\fi\fi
\caption {$K_a$ vs $E_b/N_0$ for $\epsilon\leq 0.1$, $n=30000$, $k=100$ bits. Dashed lines  correspond to asymptotic approximation obtained by taking $n\to \infty$ and are shown only for reference.}
\label{fig:1}
\end{center}
\end{figure}
\begin{figure}[ht]
\centering
\input{tikz/ebno_pupe/single_column.tex}
\begin{tikzpicture}[define rgb/.code={\definecolor{mycolor}{RGB}{#1}},
                    rgb color/.style={define rgb={#1},mycolor}]
\begin{axis}[
    width=\FigureWidth, height=\FigureHeight,
    ymode=log,
    xlabel={${E_b}/{N_0}$, dB},
    ylabel={PUPE},
    xmin=6,    xmax=22,
    ymin=8e-3, ymax=0.5,
    legend cell align={left},
    legend style={at={(0.01,0.01)}, anchor=south west},
    axis line style={latex-latex},
    grid=both,
    grid style={line width=.1pt, draw=gray!20},
    major grid style={line width=.2pt,draw=gray!50},
    tick align=inside,
    tickpos=left
]
\addplot[
    rgb color={80, 180, 80},
    mark=+,
    mark size = \MarkerSize,
    line width=\LineWidth,
    ]
table[x=EBNO,y=PUPE]{tikz/ebno_pupe/data/ka_50.dat};

\addplot[
    color=blue,
    mark=asterisk,
    mark size = \MarkerSize,
    line width=\LineWidth,
    ]
table[x=EBNO,y=PUPE]{tikz/ebno_pupe/data/ka_150.dat};
\addplot[
    color=magenta,
    mark=x,
    mark size = \MarkerSize,
    line width=\LineWidth,
    ]
table[x=EBNO,y=PUPE]{tikz/ebno_pupe/data/ka_250.dat};
\legend{$K_a=50$, $K_a=150$, $K_a=250$}
\end{axis}
\end{tikzpicture}
\ifhmode\ifnum\lastnodetype=11 \unskip\fi\fi
\caption {$E_b/N_0$ vs $PUPE$ for  $n=30000$, $k=100$ bits}
\label{fig:2}
\end{figure}

\section{Asymptotics of Random-Access}\label{sec:asymp}

In \citep{polyanskiy2017perspective} the authors evaluated a random coding bound for AWGN RAC with
$n=30000$ and $K_a = 1,...,300$. The most interesting observation was that the bound on energy-per-bit 
was essentially constant up until about $K_a=150$ and only then started to increase with $K_a$. To explain this ''phase
transition'' behavior a particular asymptotics was postulated in \citep{ZPT-isit19}, which predicts the phase transition
at roughly the same value of $K_a=150$. It turned out that at low $K_a$ the performance was essentially limited by the
minimal energy required for a single user to send $k$ bits over a fixed (but effectively infinite) blocklength. For
larger number of $K_a$ the performance is limited by the multi-user requirement: the total number of $K_a \times k$ bits
should not exceed the combined mutual information of $n \log (1 + P K_a)$. 

In the present paper we adopt the very same asymptotics of~\citep{polyanskiy2017perspective,ZPT-isit19}. Again,
we stress that the only ultimately relevant question is the one at finite blocklength. The asymptotic analysis here is
only to get some insight into the possible regimes. 

Specifically, we consider scaling of $n\to\infty$ witht $K_a$, the number of active users, scaling linearly with
blocklength (similar to the \textit{many-access} regime
\citep{chen2017capacity,kowshik2019fundamental,polyanskiy2017perspective}) i.e., $K_a=\mu n$. At the same time, the size
of the common codebook is also scaling linearly: $M=M_1 K_a$. Since we operate in the same-codebook scenario this means
that the common codebook size scales linearly with number of users: $M=M_1 K_a $. We think of $M_1$ as the effective
payload per user. We also modify the random-access model slightly by requiring that the messages of active users
$\left\{W_1,...,W_{K_a}\right\}$ are sampled uniformly from $\binom{[M]}{K_a}$ i.e, user messages are sampled
\textit{uniformly without replacement} from $[M]$. (In reality, the user messages are distributed iid $\mathrm{Unif}[M]$
which leads to around $\frac{\binom{K_a}{2}}{M}$ collisions but for finite length scenarios with $M_1=2^{100}$, this is
essentially zero, hence we may ignore collisions in our asymptotic setup and simplify the analysis.)
If $P$ denotes the power (per symbol) of each user, then the energy-per-bit $E_b/N_0$ is defined by
\begin{equation}
\label{eq:ebn0_1}
E_b/N_0=\frac{n P}{\log M_1}.
\end{equation}

Hence, for finite $E_b/N_0$, we need the total sum-power $\PT=K_a P$ to be constant. Therefore, the asymptotic energy-per-bit, denoted by $\mathcal{E}$ is given by
\begin{equation}
\label{eq:ebn0_2}
\mathcal{E}=\frac{\PT}{\mu \log M_1}.
\end{equation}
We note that $E_b/N_0$ is defined this way for the reason that $\log {M\choose K_a} \approx K_a \log M$ for relevant finite-length values.

Lastly, the error metric is PUPE. We are interested in the trade-off of minimum $\mathcal{E}$ required to achieve a target PUPE with the user density $\mu$ as $n\to\infty$. 

This setup is equivalent to the support recovery in compressed sensing considered in \citep{reeves2012sampling,reeves2013approximate}. Here, we provide a comparison of the fundamental trade-off of energy-per-bit with user density, for given PUPE and $\rho$, between our analysis of the projection decoder, the ML decoder in \citep{reeves2012sampling}, the optimal decoder based on the true posteriors (see \citep[Theorem 8]{reeves2012sampling} for instance, this assumes replica symmetry to hold) and finally a converse.

To formally state our results we modify the definition of $(M,n,\epsilon)$ code for the $K_a$ user channel $P_{Y^n|X^n}$ given in \eqref{eq:sys1} as follows.
\begin{define}
\label{def2}
An $(M,n,\epsilon)$ random-access code for the $K_a$ user MAC $P_{Y^n|X^n}$ is a pair of (possibly randomized) maps $f:[M]\to \mathcal{X}^n$ (the encoder) and $g:\mathcal{Y}^n\to\binom{[M]}{K_a}$ such that if $W_1,...,W_{K_a}$ are sampled uniformly without replacement from $[M]$ and $X_j=f(W_j)$ then the average (per-user) probability of error satisfies
\begin{equation}
\label{eq:def2}
P_{e}=\frac{1}{K_a}\sum_{j=1}^{K_a}\Pb{W_j\notin g(Y^n)}\leq \epsilon
\end{equation}
where $Y^n$ is the channel output.
\end{define}
 
Define $(n,M,\epsilon,\mathcal{E},K_a)$--code as an $(M,n,\epsilon)$ random access code (from definition \ref{def2}) for the $K_a$--MAC with codebook $\mathcal{C}$ such that $\norm{c}^2\leq n P=\mathcal{E}\log M_1,\forall c\in\mathcal{C}$. Then we can define the following fundamental limit
\begin{equation}
\label{eq:ebn0_fund}
\mathcal{E}^*(M_1,\mu,\epsilon)=\limsup_{n\to\infty}\inf \left\{\mathcal{E}:\exists\left(n,M=K_a M_1,\epsilon,\mathcal{E},K_a=\mu n\right)-\mathrm{code}\right\}.
\end{equation}

In appendix \ref{app:asymp} we sandwich the fundamental limit between an achievability and a converse bound as follows:
\begin{equation}
\label{eq:ebn0_bound}
\mathcal{E}_{conv}\leq \mathcal{E}^*\leq \mathcal{E}_{ach}\,.
\end{equation}
For particular, quite cumbersome, expressions please refer to Appendix \ref{app:asymp}.

These bounds are plotted in figures \ref{fig:asymp1} and \ref{fig:asymp2} for two different values of PUPE.
The main achievability bound is from theorem \ref{th:asymp1} and is based on the analysis of projection decoding
described in appendix \ref{app:achiev}. A different analysis of this decoder was performed in \citep{reeves2012sampling}
and the result is plotted as well. We have also plotted predicted performance of the PUPE-optimal decoder for the iid codebook
which is obtained via a non-rigorous (but highly likely to be correct) replica-method from statistical physics; see appendix
\ref{sec:opt_dec} and \citep{reeves2012sampling}).

The converse bound plotted is based on Fano inequality and the single-user converse for AWGN channel from
\citep{polyanskiy2011minimum}. The details are in appendix \ref{sec:conv_asymp}. A tighter converse (see theorem
\ref{Th:conv2}) bound can be obtained if we assume that the codebook consists of iid entries of the form $\frac{C}{K_a}$
where is $C$ is of zero mean and finite variance. This follows from \citep[Theorem 37]{reeves2013approximate}. This
bound, although only applicable to a special class of codes (iid codebooks), improves our converse bound by taking into
account the penalty incurred due to absence of knowledge of the channel state information at the decoder (resulting in
a need to spend some of the information on estimating the fading coefficients).

\begin{figure}[t]
\centering
\input{tikz/TIT_ebn0_mu/single_column_1e-3.tex}
\begin{tikzpicture}[define rgb/.code={\definecolor{mycolor}{RGB}{#1}},
                    rgb color/.style={define rgb={#1},mycolor}]

\def\DataPath{tikz/TIT_ebn0_mu/data}
\begin{axis}[
    width=\FigureWidth, height=\FigureHeight,
    xlabel={$\mathcal{E}$, dB},
    ylabel={$\mu$},
    xmin=\XMin,    xmax=\XMax,
    ymin=0, ymax=\YMax,
    legend cell align={left},
    legend style={at={(0.99,0.01)}, anchor=south east, nodes={scale=\LegendScale, transform shape}},
    axis line style={latex-latex},
    grid=both,
    grid style={line width=.1pt, draw=gray!20},
    major grid style={line width=.2pt,draw=gray!50},
    tick align=inside,
    tickpos=left
]

\addplot[
    color=magenta,
    mark=o,
    line width=\LineWidth,
    ]
table[x=EPS,y=MU]{\DataPath/pe\DataSuffix_optimal.dat};

\addplot[
    rgb color={217, 84, 26},
    mark=asterisk,
    mark size = \MarkerSize,
    line width=\LineWidth,
    ]
table[x=EPS,y=MU]{\DataPath/pe\DataSuffix_achievability.dat};

\addplot[
    color=red,
    mark=triangle,
    mark size = \MarkerSize,
    line width=\LineWidth,
    ]
table[x=EPS,y=MU]{\DataPath/pe\DataSuffix_ml.dat};

\addplot[
    rgb color={0, 128, 0},
    mark=x,
    dash pattern={on 4pt off 1pt},
    mark size = \MarkerSize,
    line width=\LineWidth,
    ]
table[x=EPS,y=MU]{\DataPath/pe\DataSuffix_converse_iid.dat};

\addplot[
    color=blue,
    mark=+,
    mark size = \MarkerSize,
    line width=\LineWidth,
    ]
table[x=EPS,y=MU]{\DataPath/pe\DataSuffix_converse.dat};

\addlegendentry[align=left]{Optimal decoder \\ (Replica method)}
\addlegendentry[align=left]{Achievability Theorem \ref{th:asymp1}}
\addlegendentry[align=left]{ML bound from \citep{reeves2012sampling}}
\addlegendentry[align=left]{Converse (i.i.d. codebook)}
\addlegendentry[align=left]{Converse}

\end{axis}
\end{tikzpicture}
\ifhmode\ifnum\lastnodetype=11 \unskip\fi\fi
\caption {$\mu$ vs $\mathcal{E}$ for $\epsilon\leq 10^{-3}$, $M_1=2^{100}$ }
\label{fig:asymp1}
\end{figure}

\begin{figure}[t]
\centering
\input{tikz/TIT_ebn0_mu/single_column_1e-1.tex}
\begin{tikzpicture}[define rgb/.code={\definecolor{mycolor}{RGB}{#1}},
                    rgb color/.style={define rgb={#1},mycolor}]

\def\DataPath{tikz/TIT_ebn0_mu/data}
\begin{axis}[
    width=\FigureWidth, height=\FigureHeight,
    xlabel={$\mathcal{E}$, dB},
    ylabel={$\mu$},
    xmin=\XMin,    xmax=\XMax,
    ymin=0, ymax=\YMax,
    legend cell align={left},
    legend style={at={(0.99,0.01)}, anchor=south east, nodes={scale=\LegendScale, transform shape}},
    axis line style={latex-latex},
    grid=both,
    grid style={line width=.1pt, draw=gray!20},
    major grid style={line width=.2pt,draw=gray!50},
    tick align=inside,
    tickpos=left
]

\addplot[
    color=magenta,
    mark=o,
    line width=\LineWidth,
    ]
table[x=EPS,y=MU]{\DataPath/pe\DataSuffix_optimal.dat};

\addplot[
    rgb color={217, 84, 26},
    mark=asterisk,
    mark size = \MarkerSize,
    line width=\LineWidth,
    ]
table[x=EPS,y=MU]{\DataPath/pe\DataSuffix_achievability.dat};

\addplot[
    color=red,
    mark=triangle,
    mark size = \MarkerSize,
    line width=\LineWidth,
    ]
table[x=EPS,y=MU]{\DataPath/pe\DataSuffix_ml.dat};

\addplot[
    rgb color={0, 128, 0},
    mark=x,
    dash pattern={on 4pt off 1pt},
    mark size = \MarkerSize,
    line width=\LineWidth,
    ]
table[x=EPS,y=MU]{\DataPath/pe\DataSuffix_converse_iid.dat};

\addplot[
    color=blue,
    mark=+,
    mark size = \MarkerSize,
    line width=\LineWidth,
    ]
table[x=EPS,y=MU]{\DataPath/pe\DataSuffix_converse.dat};

\addlegendentry[align=left]{Optimal decoder \\ (Replica method)}
\addlegendentry[align=left]{Achievability Theorem \ref{th:asymp1}}
\addlegendentry[align=left]{ML bound from \citep{reeves2012sampling}}
\addlegendentry[align=left]{Converse (i.i.d. codebook)}
\addlegendentry[align=left]{Converse}

\end{axis}
\end{tikzpicture}
\ifhmode\ifnum\lastnodetype=11 \unskip\fi\fi
\caption {$\mu$ vs $\mathcal{E}$ for $\epsilon\leq 0.1$, $M_1=2^{100}$ }
\label{fig:asymp2}
\end{figure}

\section{Conclusion and future work}
\label{sec:conclude}
In this work we considered random access for a quasi-static Rayleigh fading model. We developed low-complexity iterative
decoding scheme using LDPC codes to decode up to $T$--users in a slot, and using $T$--fold ALOHA on top of it gave us a
practical achievable scheme whose required $E_b/N_0$ vs $K_a$ trade-off is very close to that of a potential random
coding bound. In terms of future work, one of the most important things is to figure out how to relax the assumption on
the knowledge of the number of users in a slot in $T$--fold ALOHA to get a rigorous random coding achievability bound.
Another important factor is frame-synchronization
which we have assumed. Our rationale is that frame-synchronism can be achieved via regularly spaced beacons. However, to
reduce complexity even further it would be interesting to develop a beacon-free (and, hence, frame-asynchronous)
schemes. Finally, large gains in energy consumption can be attained via the use of MIMO, especially multiple receive
antennas. Quantifying these gains is yet another interesting direction.

\bibliographystyle{IEEEtran}
\bibliography{IEEEabrv,main}
\pagebreak

\appendices
\section{FBL achievability bounds}
\label{app:achiev}
In this section we state the random coding FBL achievability bounds for the model in \eqref{eq:sys1}.
But first, we discuss the encoding and decoding which we use to derive achievability. For encoding, we use random coding with \textbf{Gaussian codebook}: for each message a
$\cn(0,P'I_n)$ vector is independently  generated. That is $X_i\distas{iid}\cn(0,P'I_n)$ where $P'\leq P$. For a message $W_j$ of user $j$, if $\norm{X(W_j)}^2>nP$ then that user sends $0$.

\subsection{Projection decoding}
Inspired from \citep{yang2013quasi}, we use a projection based decoder. The idea is the following. Suppose there was no additive noise. Then the received vector will lie in the subspace spanned by the sent codewords no matter what the fading coefficients are. Fix an output list size $K_1$. The decoder outputs a list of $K_1$ codewords which form the subspace, such that projection of $Y$ onto this subspace is maximum.  Formally, let $C$ denote a set of vectors in $\mathbb{C}^n$. Denote $P_{C}$ as the orthogonal projection operator onto the subspace spanned by $C$. 

Let $\cd$ denote the common codebook. Then, upon receiving $Y$ from the channel, the decoder outputs $g(Y)$ given by
\begin{IEEEeqnarray}{C}
\label{eq:dec1}
g(Y)=\{f^{-1}(c): c\in\hat{C}\} \IEEEnonumber\\
\hat{C}=\arg\max_{C\subset \cd:|C|=K_1}\norm{P_{C}Y}^2
\end{IEEEeqnarray}
where $f$ is the encoding function.

The projection decoding is also called nearest-subspace decoding, and has been used in the compressed sensing literature \citep{wainwright2009information,reeves2012sampling,reeves2009note,reeves2008sparse}. One might prefer to view it as a kind of maximum likelihood (ML) decoding as well (and is called as such), since it is equivalent to 
\begin{\Ieee}{LLL}
\label{eq:dec2}
\hat{C}=\arg\max_{{C\subset \cd:|C|=K_1}}\max_{\{H_i:i\in C\}} P_{Y|X,H}\Ieeen\\ 
P_{Y|X,H}(y,\{x_i\},\{h_i\})=\frac{1}{\pi^n}e^{-\norm{y-\sum_i h_i x_i}^2}\Ieeen
\end{\Ieee}

 It can be shown that for the vanilla $K_a$--user quasi-static fading MAC (with different codebook and the usual joint probability of error) with no channel state information, projection decoding achieves $\epsilon$--capacity region $C_\epsilon$ of the MAC \citep{kowshik2019fundamental}.
\subsection{FBL Achievability bounds}
\begin{theorem}
\label{th:Ka1}
Fix $P'<P$. Let $K_1\leq K_2$. Then there exists an $(M,n,\epsilon)$ (with $\epsilon\geq \frac{K_2-K_1}{K_2}$) random access code for the $K_2$--MAC \eqref{eq:sys1} satisfying power constraint $P$ (see \eqref{eq:pow}) and 
\begin{equation}
\label{eq:Ka_1}
\epsilon\equiv P_e(M,n,K_2,P) \leq \frac{K_2-K_1}{K_2}+\frac{1}{K_2}\sum_{t=1}^{K_1}K_{1,t}p_t +p_0
\end{equation}
with
$$
p_0 =\frac{\binom{K_2}{2}}{M}+K_2\Pb{\frac{P'}{2}\sum_{i\in[2n]}W_i^2>nP}, \quad W_i\distas{iid}\mathcal{N}(0,1),
$$
and
{\allowdisplaybreaks
\begin{\Ieee}{LLL}
p_t &\leq &\inf_{\delta>0}\left(\binom{K_2}{K_{1,t}}e^{-(n-K_1)\delta} +\Pb{\bigcup_{\substack{S_0\subset [K_2]\\ |S_0|=K_{1,t}}} \left\{G(Y,S_0,c_{S_0},t)\geq V_{n,t} \right\}}\right) \label{eq:Ka2a}\Ieeen\\ 
\end{\Ieee}
where 
{\allowdisplaybreaks
\begin{\Ieee}{LLL}
G(Y,S_0,c_{S_0},t)=\frac{\norm{Y}^2-\max_{\substack{S_2\subset S_0\\|S_2|=t}}\norm{P_{c_{[S_2\cup  \left([K_2]\setminus S_0\right)]}}Y}^2}{\norm{Y}^2-\norm{P_{c_{[[K_2]\setminus S_0]}}Y}^2} \label{eq:Ka3aa}\\ 
\Ieeen\\
K_{1,t}=K_2-K_1+t \label{eq:Ka3a}\Ieeen\\
V_{n,t}=e^{-\tilde{V}_{n,t}}\label{eq:Ka3b}\Ieeen\\
\tilde{V}_{n,t}=\delta+R_1+s_t \label{eq:Ka3c}\Ieeen\\
s_t=\frac{\ln\binom{n'-1}{t-1}}{n-K_1} \label{eq:Ka3d}\Ieeen\\
R_1=\frac{\ln\binom{M-K_2}{t}}{n-K_1}\label{eq:Ka3e}\Ieeen\\
n'=n-K_1+t \label{eq:Ka3f}\Ieeen\\
\end{\Ieee}
}
and, $\cd=\{c_i:i \in [M]\}$ denotes the Gaussian codebook, $\{c_i:i\in [K_2]\}$ are the transmitted codewords, $c_{S}=\{c_i:i\in S\}$, $Y$ is the received vector.\\

Further, the right hand side of \eqref{eq:Ka2a} can be upper bounded as
\begin{\Ieee}{LLL}
p_t &\leq & \inf_{\substack{\delta>0\\ \delta_1>0 \\ 0<\delta_2<1}}\left[\binom{K_2}{K_{1,t}}\left(e^{-(n-K_1)\delta}+e^{-n'f_n(\delta_1)}+e^{-n'\frac{\delta_2^2}{2}}\right)+\right.\\
&&\left. \Pb{\min_{1\leq i\leq K_1-t+1} \frac{P'\sum_{j=i}^{i+t-1}|H_{(j)}|^2}{1+P'\sum_{j=i+t}^{K_{1,t}-1+i}|H_{(j)}|^2}\leq \frac{(1+\delta_1(1-V_{n,t}))V_{n,t}^{-1} -1}{1-\delta_2}}\right] \label{eq:Ka2}\Ieeen\\
\end{\Ieee}
where
\begin{\Ieee}{LLL}
f_n(\delta_1)= \delta_1+1+\frac{2V_{n,t}}{1-V_{n,t}}(1+\delta_1)
 -\sqrt{1+\frac{2V_{n,t}}{1-V_{n,t}}(1+\delta_1)}\sqrt{2\delta_1+1+\frac{2V_{n,t}}{1-V_{n,t}}(1+\delta_1)}\label{eq:Ka3g}\Ieeen
\end{\Ieee}

}
and $\{|H_{(j)}|^2:j\in[K_2]\}$ denotes the order statistics of fading powers (in decreasing order).
\end{theorem}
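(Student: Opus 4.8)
The plan is to run a Gallager‑type random‑coding argument tailored to the projection (nearest‑subspace) decoder~\eqref{eq:dec1}, in four stages: discard two ``bad'' events, use a counting identity to convert PUPE into a sum over the number $t$ of impostor codewords in the decoded list, perform a threshold/change‑of‑measure step that isolates a Beta‑tail term and yields~\eqref{eq:Ka2a}, and finally a concentration (``typicality'') step that reduces the remaining probability to an event on the fading order statistics, yielding~\eqref{eq:Ka2}. First I would isolate the event $\mathcal G$ on which no two users pick the same message and no codeword violates the power constraint: a union bound over pairs gives collision probability $\le\binom{K_2}{2}/M$, and since $c_j\distas{}\cn(0,P'I_n)$ has $\norm{c_j}^2=\tfrac{P'}{2}\sum_{i\in[2n]}W_i^2$ with $W_i\distas{iid}\mathcal N(0,1)$, a union bound over the $K_2$ users bounds the power‑violation probability by $K_2\Pb{\tfrac{P'}{2}\sum_{i\in[2n]}W_i^2>nP}$, so $\Pb{\mathcal G^c}\le p_0$. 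Declaring every user off $\mathcal G$ in error costs at most $p_0$ in PUPE; on $\mathcal G$ the $K_2$ transmitted codewords are distinct genuine $\cn(0,P'I_n)$ vectors and the decoder returns a set $\hat C$ with $|\hat C|=K_1$. If $\mathbf t\triangleq|\hat C\setminus\{c_1,\dots,c_{K_2}\}|$ is the number of impostor codewords, then exactly $K_2-|\hat C\cap\{c_1,\dots,c_{K_2}\}|=K_{1,\mathbf t}$ users err, so $\sum_j\Pb{E_j}\le K_2p_0+\sum_{t=0}^{K_1}K_{1,t}\Pb{\mathbf t=t,\mathcal G}$; bounding the $t=0$ term by $K_2-K_1$ and setting $p_t\triangleq\Pb{\mathbf t=t,\mathcal G}$ gives~\eqref{eq:Ka_1}, so it remains to bound $p_t$ for $t\ge 1$.

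To bound $p_t$, I would note that $\mathbf t=t$ forces, for the $K_{1,t}$‑subset $S_0$ of true users \emph{not} in $\hat C$ ($\binom{K_2}{K_{1,t}}$ choices) and the $t$‑set of impostor codewords in $\hat C$ ($\binom{M-K_2}{t}$ choices), that these impostors together with the $K_1-t$ retained true codewords beat every competitor formed by swapping the impostors for a $t$‑subset $S_2\subseteq S_0$; equivalently the ``residual‑energy ratio'' of the impostors is $\le G(Y,S_0,c_{S_0},t)$ as in~\eqref{eq:Ka3aa}. Since the impostor codewords are iid $\cn(0,P'I_n)$ and independent of everything else, conditioned on $Y$ and the retained codewords this ratio equals $1-F$, where $F$ is the fraction of the energy of a fixed vector captured by a uniformly random complex $t$‑dimensional subspace of dimension $n'=n-K_1+t$ (see~\eqref{eq:Ka3f}); thus $1-F\distas{}\mathrm{Beta}(n'-t,t)$ and $\Pb{1-F\le v}\le\binom{n'-1}{t-1}v^{\,n-K_1}$. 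Splitting the event by whether $\max_{S_0}G\ge V_{n,t}$ and union‑bounding the low branch over $S_0$, over impostor sets, and over the ratio value,
\begin{equation*}
p_t\le\binom{K_2}{K_{1,t}}\binom{M-K_2}{t}\binom{n'-1}{t-1}V_{n,t}^{\,n-K_1}+\Pb{\bigcup_{S_0}\{G(Y,S_0,c_{S_0},t)\ge V_{n,t}\}}.
\end{equation*}
Because $n-K_1=n'-t$ and $\tilde V_{n,t}=\delta+R_1+s_t$ with $R_1,s_t$ from~\eqref{eq:Ka3d},~\eqref{eq:Ka3e}, the first term collapses to $\binom{K_2}{K_{1,t}}e^{-(n-K_1)\delta}$, and optimizing over $\delta$ gives~\eqref{eq:Ka2a}.

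For the remaining probability I would condition on $H_{[K_2]}$ and fix $S_0$; projecting $Y$ onto the orthogonal complement of the $K_1-t$ retained codewords produces $v=\sum_{i\in S_0}H_i\tilde c_i+\tilde Z$, which is $\cn(0,(1+P'\sum_{i\in S_0}|H_i|^2)P^\perp)$ on a complex $n'$‑dimensional space, and for the subset $S_2^\ast\subseteq S_0$ of the $t$ largest $|H_i|^2$ the event $G\ge V_{n,t}$ forces $\norm{P_{\tilde c_{S_2^\ast}}v}^2\le(1-V_{n,t})\norm v^2$. Exponential (Chernoff) bounds on the quadratic forms --- $\norm v^2$ concentrating near $n'(1+P'\sum_{S_0}|H_i|^2)$ with tail cost $e^{-n'\delta_2^2/2}$, and $\norm{P_{\tilde c_{S_2^\ast}}v}^2$ bounded below near $n'P'\sum_{i\in S_2^\ast}|H_i|^2$ with tail cost $e^{-n'f_n(\delta_1)}$ after controlling the cross term with the interference‑plus‑noise component --- together with a union bound over $S_0$ reduce the event to the fading‑only inequality $\tfrac{P'\sum_{i\in S_2^\ast}|H_i|^2}{1+P'\sum_{i\in S_0\setminus S_2^\ast}|H_i|^2}\le\tfrac{(1+\delta_1(1-V_{n,t}))V_{n,t}^{-1}-1}{1-\delta_2}$. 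A rearrangement argument in the ordered fading powers then shows the extremal $S_0$ is a window of consecutive order statistics $\{|H_{(i)}|^2,\dots,|H_{(i+K_{1,t}-1)}|^2\}$, so the union over $S_0$ collapses to the $\min$ over $i\in\{1,\dots,K_1-t+1\}$ in~\eqref{eq:Ka2}; optimizing over $\delta,\delta_1,\delta_2$ completes the argument.

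The routine parts are the Beta‑tail estimate and the bookkeeping calibrating $V_{n,t}$ so the impostor term telescopes. I expect the main obstacles to be: (a) the non‑asymptotic concentration in the last step, namely extracting the explicit rates $f_n(\delta_1)$ and $\delta_2^2/2$ after bounding the cross term $\langle\sum_{i\in S_2^\ast}H_i\tilde c_i,\,P_{\tilde c_{S_2^\ast}}(\sum_{i\in S_0\setminus S_2^\ast}H_i\tilde c_i+\tilde Z)\rangle$ and the leakage of interference‑plus‑noise onto the random $t$‑dimensional subspace; and (b) the rearrangement lemma identifying the worst $S_0$ with a consecutive window of fading order statistics.
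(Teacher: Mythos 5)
Your proposal follows essentially the same route as the paper's proof: the change of measure and power/collision events costing $p_0$, the decomposition of the PUPE over the number $t$ of impostor codewords with weight $K_{1,t}$, the conditional $\mathrm{Beta}(n-K_1,t)$ tail bound for the projection onto $t$ random extra dimensions combined with the threshold $V_{n,t}$ (calibrated so the union over impostor sets and the binomial factor telescope into $e^{-(n-K_1)\delta}$) to get \eqref{eq:Ka2a}, and then concentration plus the consecutive-window rearrangement over fading order statistics to get \eqref{eq:Ka2}. The only (cosmetic) divergence is in how the two exponents are extracted: the paper gets $e^{-n'f_n(\delta_1)}$ from Birg\'e's upper-tail bound on the single non-central chi-square $\norm{\PPA\hat Z-\tfrac{V_{n,t}}{1-V_{n,t}}\PPA\sum_{i\in S_2^*}H_ic_i}^2$ and $e^{-n'\delta_2^2/2}$ from the lower tail of the central chi-square $\norm{\PPA\sum_{i\in S_2^*}H_ic_i}^2$, rather than from separate concentration of $\norm{v}^2$ and $\norm{P_{\tilde c_{S_2^*}}v}^2$ with an explicit cross term as you sketch, but this is the same idea and leads to the same fading-only event.
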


\begin{proof}
See appendix \ref{app:proof}.
\end{proof}

\begin{remark}
\label{remark:a}
We note that \eqref{eq:Ka2a} in the above theorem holds even in case of random coding with spherical codebook i.e., codewords distributed uniformly on the (complex) power shell with $p_0=\frac{\binom{K_2}{2}}{M}$. But \eqref{eq:Ka2} requires that the codebook is (complex) Gaussian.
\end{remark}

To compute \eqref{eq:Ka2a} we use Monte-Carlo simulation described in section \ref{sec:res} for small values of $K_2$. For moderated values of $K_2$, the computation of the probability of union of a combinatorially large number of events in \eqref{eq:Ka2a} is prohibitive. However, there is a computationally tractable bound (which is worse than \eqref{eq:Ka2a}) on $p_t$ that we present in appendix \ref{app:proof}. 

We make the following observation about $K_1$. When the number of active users $K_2$ is large, it is hard to decode the message of the user with least fading power, since its expectation is $\frac{1}{K_2}$. Consequently, this user becomes a bottleneck. So, intuitively, it makes sense to drop the users with very bad channel gains and decode the rest, and the definition of per-user probability of error makes this possible. Indeed, this was proposed in \citep{bettesh2000outages} where the joint multiuser detector drops a fraction of users with smallest gains such that the rate tuple of the remaining users is inside the (random) capacity region. So for each $K_2$, we can find the optimum $K_1$ which is the number of messages that are decoded in a frame. 

\section{Asymptotics of Random-Access}
\label{app:asymp}
In this section, we provide achievability and converse bounds on $\mathcal{E}^*$, defined in~\eqref{eq:ebn0_fund}.
\subsection{Achievability}

\begin{theorem}
\label{th:asymp1}
Consider the channel \eqref{eq:sys1} with $K_a=\mu n$ where $\mu< 1$. Fix $M_1>1$ and target PUPE $\epsilon$. Let $M=K_a M_1$ denote the size of the codebook and $\PT=K_aP$ be the total power. Let $h(p)=-p\ln p-(1-p)\ln(1-p),p\in[0,1]$. Fix $\nu \in (1-\epsilon,1]$. Let $\epsilon'=\epsilon-(1-\nu)$. Then if $\mathcal{E}>\mathcal{E}_{ach}=\sup_{\frac{\epsilon'}{\nu}< \theta\leq 1}\sup_{\xi\in[0,\nu(1-\theta)]}\frac{\PTX(\theta,\xi)}{\mu\log M_1}$, there exists a sequence of $\left(n,M=K_a M_1,\epsilon_n,\mathcal{E},K_a=\mu n\right)$ codes such that $\limsup_{n\to\infty} \epsilon_n\leq \epsilon$, where, for $\frac{\epsilon'}{\nu}< \theta\leq 1$ and $\xi\in[0,\nu(1-\theta)]$, 
{\allowdisplaybreaks
\begin{IEEEeqnarray*}{RLL}
\PTX(\theta,\xi)&=&\frac{\hat{f}(\theta,\xi)}{1-\hat{f}(\theta,\xi)\alpha\left(\xi+\nu\theta,\xi+1-\nu(1-\theta)\right)}\label{eq:sc_nc1a}\Ieeen\\
\hat{f}(\theta,\xi)&=&\frac{f(\theta)}{\alpha(\xi,\xi+\nu\theta)}\label{eq:sc_nc1b}\Ieeen\\
f(\theta)&=& \frac{\frac{1+\delta_1^*(1-V_\theta)}{V_\theta}-1}{1-\delta_2^*}\label{eq:sc_nc1c}\Ieeen\\
V_{\theta}&=&e^{-\tilde{V}_{\theta}}\label{eq:sc_nc1d}\Ieeen\\
\tilde{V}_{\theta} &=&\delta^*+\mu\frac{(M_1-1)}{1-\mu\nu}h\left(\frac{\theta\nu}{M_1-1}\right)+
\frac{1-\mu\nu(1-\theta)}{1-\mu\nu}h\left(\frac{\theta\mu\nu}{1-\mu\nu(1-\theta)}\right) \label{eq:sc_nc1e}\Ieeen\\
\delta^* &=&\frac{\mu h(1-\nu(1-\theta))}{1-\mu\nu}\label{eq:sc_nc1f}\Ieeen\\
c_\theta &=&\frac{2V_{\theta}}{1-V_{\theta}}\label{eq:sc_nc1g}\Ieeen\\
q_\theta &=&\frac{\mu h(1-\nu(1-\theta))}{1-\mu\nu(1-\theta)}\label{eq:sc_nc1h}\Ieeen\\
\delta_1^*&=& q_\theta(1+c_\theta)+
\sqrt{q_\theta^2 (c_\theta^2+2c_\theta)+2q_\theta(1+c_\theta)}\label{eq:sc_nc1i}\Ieeen\\
\delta_2^* &=& \inf \left\{x: 0<x<1, -\ln(1-x)-x>\frac{\mu h(1-\nu(1-\theta))}{1-\mu\nu(1-\theta)}\right\}\label{eq:sc_nc1j}\Ieeen\\
\alpha(a,b)&=& a\ln(a)-b\ln(b)+b-a. \label{eq:sc_nc1k}\Ieeen
\end{IEEEeqnarray*}

Hence $\mathcal{E}^*\leq \mathcal{E}_{ach}$.
}
\end{theorem}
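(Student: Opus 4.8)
The plan is to establish $\mathcal{E}^*\le\mathcal{E}_{ach}$ by pushing the finite-blocklength random-coding bound of Theorem~\ref{th:Ka1}, in the Gaussian-codebook form~\eqref{eq:Ka2}, through the limit $n\to\infty$ under the scaling $K_2=K_a=\mu n$, $M=K_aM_1$, $K_1=\lceil\nu K_a\rceil$, so that a fixed fraction $1-\nu$ of users is deliberately dropped (the ``drop the weakest users'' idea of~\citep{bettesh2000outages}); note $\epsilon\ge 1-\nu=\frac{K_2-K_1}{K_2}$ by hypothesis. Given $\mathcal{E}>\mathcal{E}_{ach}$, I would fix $\PT'$ with $\mathcal{E}_{ach}\,\mu\log M_1<\PT'<\PT$ and set $P'=\PT'/K_a<P$ (valid since $\PT'<\PT$), and then show the resulting codes satisfy $\limsup_n\epsilon_n\le\epsilon$, which yields the claim via~\eqref{eq:ebn0_fund}.

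\emph{Reduction to a tail of $p_t$'s.} Let $A_t$, $t=0,\dots,K_1$, be the event that the decoded list contains exactly $t$ codewords that were not transmitted; these events partition the sample space, on $A_t$ exactly $K_{1,t}=K_2-K_1+t$ users are in error, and $\Pb{A_t}\le p_t$. Thus $\epsilon_n=K_2^{-1}\sum_{t=0}^{K_1}K_{1,t}\Pb{A_t}$; splitting the sum at $t^*=\lfloor\epsilon'K_a\rfloor$, i.e. at $\theta^*=\epsilon'/\nu$ (for which $K_{1,t^*}/K_2\to\epsilon$), and using $K_{1,t}/K_2\le K_{1,t^*}/K_2$ together with $\sum_{t\le t^*}\Pb{A_t}\le1$ for $t\le t^*$ and $K_{1,t}/K_2\le1$ for $t>t^*$, gives $\epsilon_n\le\epsilon+o(1)+\sum_{t=t^*+1}^{K_1}p_t$, where $p_0\to0$ is elementary (the collision term $\binom{K_a}{2}/M$ vanishes under without-replacement sampling, and the power-overflow term is a $\chi^2_{2n}$ large deviation times $K_a$). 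So it suffices to prove $p_t\to0$ exponentially in $n$, \emph{uniformly} for $\theta=t/K_1\in(\epsilon'/\nu,1]$.

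\emph{Asymptotic evaluation of~\eqref{eq:Ka2}.} Fix such a $\theta$ and insert the scaling into~\eqref{eq:Ka3b}–\eqref{eq:Ka3g}. Stirling's formula on $\binom{M-K_2}{t}$, $\binom{n'-1}{t-1}$, $\binom{K_2}{K_{1,t}}$ yields $\tilde V_{n,t}\to\tilde V_\theta$, $V_{n,t}\to V_\theta$ as in~\eqref{eq:sc_nc1e}, and turns the requirement ``each binomial-weighted exponential in~\eqref{eq:Ka2} vanishes'' into the conditions $\delta>\delta^*$, $\lim_nf_n(\delta_1)>q_\theta$, and, replacing the loosened exponent $\delta_2^2/2$ of~\eqref{eq:Ka2} by the exact (larger) Chernoff rate $-\ln(1-\delta_2)-\delta_2$, which only tightens the bound, $-\ln(1-\delta_2)-\delta_2>q_\theta$; the critical values are exactly $\delta^*,\delta_1^*,\delta_2^*$ of~\eqref{eq:sc_nc1f},~\eqref{eq:sc_nc1i},~\eqref{eq:sc_nc1j}, with $\delta_1^*$ the positive root of the quadratic obtained by equating the $n\to\infty$ limit of~\eqref{eq:Ka3g} to $q_\theta$. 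The only surviving term of~\eqref{eq:Ka2} is then the probability $\Pb{\min_{1\le i\le K_1-t+1}R_{n,i}\le\tau_n}$, where $R_{n,i}=P'\sum_{j=i}^{i+t-1}|H_{(j)}|^2\big/\bigl(1+P'\sum_{j=i+t}^{K_{1,t}-1+i}|H_{(j)}|^2\bigr)$ and $\tau_n\to f(\theta)$ of~\eqref{eq:sc_nc1c}. Since the $|H_i|^2$ are i.i.d.\ $\mathrm{Exp}(1)$ with quantile function $u\mapsto-\ln(1-u)$, the $j$-th largest among $K_a$ of them concentrates near $-\ln(j/K_a)$, so a uniform law of large numbers for the order statistics gives, with $i=\lfloor\xi K_a\rfloor$ and using $t=\theta\nu K_a$, $K_{1,t}=K_a(1-\nu(1-\theta))$, $K_2-K_1=(1-\nu)K_a$, that $P'\sum_{j=i}^{i+t-1}|H_{(j)}|^2\to\PT'\alpha(\xi,\xi+\nu\theta)$ and $P'\sum_{j=i+t}^{K_{1,t}-1+i}|H_{(j)}|^2\to\PT'\alpha(\xi+\nu\theta,\xi+1-\nu(1-\theta))$, uniformly over $\xi\in[0,\nu(1-\theta)]$, with $\alpha(a,b)=\int_a^b(-\ln u)\,du$ as in~\eqref{eq:sc_nc1k}. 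Hence $\min_iR_{n,i}$ converges to the deterministic minimum over $\xi\in[0,\nu(1-\theta)]$ of $\PT'\alpha(\xi,\xi+\nu\theta)\big/\bigl(1+\PT'\alpha(\xi+\nu\theta,\xi+1-\nu(1-\theta))\bigr)$, the probability vanishes iff this minimum exceeds $f(\theta)$, and solving that for $\PT'$ is exactly $\PT'>\sup_\xi\PTX(\theta,\xi)$ with $\PTX$ as in~\eqref{eq:sc_nc1a}. Taking the supremum over $\theta\in(\epsilon'/\nu,1]$ and then $\PT'\uparrow\PT$ shows $\sum_{t>t^*}p_t\to0$ whenever $\PT>\mathcal{E}_{ach}\mu\log M_1$; hence $\limsup_n\epsilon_n\le\epsilon$ and $\mathcal{E}^*\le\mathcal{E}_{ach}$.

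\emph{Main obstacle.} The hard part is the uniform law of large numbers for the fading order statistics: the window offset $i$ itself ranges over $\Theta(K_a)$ values, so the partial-sum-to-integral convergence $P'\sum_{j=i}^{i+t-1}|H_{(j)}|^2\to\PT'\alpha(\cdot,\cdot)$ must hold uniformly in $i$ (a Glivenko--Cantelli-type control of the empirical quantile process, which must also handle that the summands are dependent order statistics of a sample whose size grows with $n$), and likewise uniformly in $\theta\in(\epsilon'/\nu,1]$ in order to sum the $p_t$'s. A secondary, bookkeeping difficulty is that at the critical $\delta,\delta_1,\delta_2$ the binomial-weighted exponentials are only $\Theta(1)$ rather than $o(1)$, so one perturbs each slightly and leans on the \emph{strict} inequality $\mathcal{E}>\mathcal{E}_{ach}$ (equivalently a strict gap $\PT'>\sup_{\theta,\xi}\PTX$) to absorb the perturbation of $f(\theta)$; this also needs continuity of $\PTX(\theta,\xi)$ and attainability of the suprema defining $\mathcal{E}_{ach}$.
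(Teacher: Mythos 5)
Your proposal is correct and follows exactly the route the paper itself indicates (but omits): passing the finite-blocklength bound~\eqref{eq:Ka2} of Theorem~\ref{th:Ka1} to the limit under the scaling $K_2=\mu n$, $K_1=\nu K_a$, $t=\theta\nu K_a$, with Stirling asymptotics producing $\tilde V_\theta,\delta^*,q_\theta,\delta_1^*,\delta_2^*$ and the strong law for exponential order statistics turning the windowed sums into $\PT'\alpha(\cdot,\cdot)$, exactly as in the cited \citep[Theorem IV.1]{kowshik2019fundamental}. You in fact supply more detail than the paper does, and you correctly identify both the uniform order-statistics control and the strict-inequality slack as the points requiring care.
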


The proof of the above theorem follows from \eqref{eq:Ka2} (theorem \ref{th:Ka1}) and ideas very similar to \citep[Theorem IV.1]{kowshik2019fundamental}. We omit the details.

\subsection{Optimal decoder}
\label{sec:opt_dec}
In this section we briefly describe the optimal decoder and its performance assuming replica symmetry. More details can be found in \citep{reeves2012sampling}. 

Let the codebook be $\mathcal{C}$. The optimal decoder for PUPE is the one which computes, for $c\in\mathcal{C}$, the posteriors $P_{c|Y^n}$ which is the probability, conditional on received vector $Y^n$, that $c$ is the list of transmitted codewords. Then, it outputs the list of codewords corresponding to top $K_a$ posteriors. Further, the system model is slightly modified in that each message is transmitted with probability $p=K_a/M=1/M_1$. In the limiting case, assuming replica symmetry, the posteriors converge to the posterior $\Pb{X\neq 0 |Y}$ of a scalar channel $Y=X+\sigma Z$ where $Z\distas{}\cn(0,1)$, $X$ is $\cn(0,1)$ with probability $p$ and $0$ with probability $1-p$ and is independent of $Z$. The value of $\sigma$ is given by (see \citep[Theorem 8]{reeves2012sampling}, but modified here for complex case)
\begin{\Ieee}{LLL}
\label{eq:sc_replica_1}
\sigma^2=\arg\min_{\tau>0}\left\{\frac{1}{\mu M_1}\log \tau M_1+\log(e)\frac{1}{\tau M_1 \PT}+I(X;X+\sqrt{\tau}Z)\right\}.\Ieeen
\end{\Ieee}
 The PUPE converges to $\Pb{\Pb{X\neq 0|Y}<T | X\neq 0}$ where $T$ satisfies $\Pb{\Pb{X\neq 0|Y}>T}=p$. Hence, we can find the minimum $\PT$ such that this PUPE of the scalar channel is at most $\epsilon$, and this gives another achievability bound (assuming replica symmetry) on $\mathcal{E}^*$.

\subsection{Converse}
\label{sec:conv_asymp}
We present a converse for $\mathcal{E}^*$ based on Fano inequality and using the results from \citep{tan2014fixed,polyanskiy2011minimum}

\begin{theorem}
Let $M=K_a M_1$ be the codebook size. Given $\epsilon\leq 1-\frac{K_a}{M}$ and $\mu$ such that $M_1>2$ then $\mathcal{E}^*(M_1,\mu,\epsilon)>\mathcal{E}_{conv}$ where $\mathcal{E}_{conv}=\max\{\mathcal{E}_{conv,1},\mathcal{E}_{conv,2}\}$ satisfies the following two bounds
\begin{enumerate}
\item \begin{\Ieee}{LLL}
\label{eq:sc_conv_2}
\mathcal{E}_{conv,1}= \inf \frac{\PT}{\mu\log M_1}\Ieeen
\end{\Ieee}
where infimum is taken over all $\PT>0$ that satisfies 
\begin{\Ieee}{LLL}
\label{eq:sc_conv_3}
\mu\theta\log M_1 -\epsilon\mu \log\left( M_1-1\right)  -\mu h_2(\epsilon)\leq \log\left(1+\alpha(1-\theta,1)\PT\right),\forall\theta\in[0,1]\Ieeen\IEEEeqnarraynumspace
\end{\Ieee}
and $\alpha$ is defined in \eqref{eq:sc_nc1k}.
\\
\item \begin{\Ieee}{LLL}
\label{eq:sc_conv_4}
\mathcal{E}_{conv,2}= \inf \frac{\PT}{\mu\log M_1}\Ieeen
\end{\Ieee}
where infimum is taken over all $\PT>0$ that satisfies 
\begin{\Ieee}{LLL}
\label{eq:sc_conv_5}
\epsilon \geq 1-\Ex{ Q\left(Q^{-1}\left(\frac{1}{M_1}\right)-\sqrt{\frac{2\PT}{\mu}|H|^2}\right)}\Ieeen
\end{\Ieee}
where $\mathcal{Q}$ is the complementary CDF function of the standard normal distribution.
\end{enumerate}

\end{theorem}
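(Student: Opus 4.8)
The plan is to prove the two families of inequalities \eqref{eq:sc_conv_3} and \eqref{eq:sc_conv_5} separately. Each asserts that the total power $\PT=\mathcal E\,\mu\log M_1$ of any admissible sequence of $(n,M=K_aM_1,\epsilon_n,\mathcal E,K_a=\mu n)$ codes with $\limsup_n\epsilon_n\le\epsilon$ must, as $n\to\infty$, satisfy a certain constraint; intersecting the constraints and taking the infimum in \eqref{eq:ebn0_fund} then gives $\mathcal E^*\ge\max\{\mathcal E_{conv,1},\mathcal E_{conv,2}\}=\mathcal E_{conv}$. Both come from a genie-aided reduction plus a classical converse: a list-decoding meta-converse together with Fano and the fading-MAC sum-capacity for \eqref{eq:sc_conv_3}, and the minimum-energy single-user converse of \citep{polyanskiy2011minimum} for \eqref{eq:sc_conv_5}.

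For \eqref{eq:sc_conv_3}, first reveal the full CSI $H_{[K_a]}$ to the receiver; a converse for this enhanced (CSIR) model is a converse for ours. Fix $\theta\in[0,1]$; when the left side of \eqref{eq:sc_conv_3} is $\le 0$ there is nothing to prove, so assume $\theta>\epsilon$. Order the users by $|H_i|^2$ and additionally reveal the codewords of the $\lceil(1-\theta)K_a\rceil$ users with the \emph{largest} gains, letting the receiver subtract them; what remains is the $n$-symbol Gaussian MAC $\tilde Y=\sum_{i\in\mathcal W}H_iX_i+Z$ carrying the set $\mathcal W$ of the $m=\lfloor\theta K_a\rfloor$ \emph{weakest} users, with CSIR. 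Running the original decoder on $Y$ still recovers $\{W_i:i\in\mathcal W\}$ inside a list of size $K_a$, and since $\sum_{i\in\mathcal W}\Pb{W_i\notin g(Y)}\le\sum_{i=1}^{K_a}\Pb{W_i\notin g(Y)}\le\epsilon K_a$, this sub-problem has per-user error $\le\epsilon/\theta$. Invoking the list-decoding meta-converse \citep[Proposition 3]{tan2014fixed} — exactly as in the proof of Theorem~\ref{th:con_main1}, it lets us replace $\log M$ by $\log(M/K_a)=\log M_1$ in the subsequent Fano step — and then Fano, we get $I(W_{\mathcal W};\tilde Y\mid H)\ge m\log M_1-\sum_{i\in\mathcal W}\big(h_2(p_i)+p_i\log(M_1-1)\big)+o(n)\ge m\log M_1-K_a h_2(\epsilon)-\epsilon K_a\log(M_1-1)+o(n)$, using $\sum_{i\in\mathcal W}p_i\le\epsilon K_a$ and concavity of $h_2$. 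On the other hand, by data processing and the single-antenna sum-capacity bound for the Gaussian MAC with CSIR, $I(W_{\mathcal W};\tilde Y\mid H)\le n\,\Ex{\log\big(1+P\sum_{i\in\mathcal W}|H_i|^2\big)}\le n\log\big(1+P\,\Ex{\sum_{i\in\mathcal W}|H_i|^2}\big)$ by Jensen, and a Rényi-representation computation for the order statistics of i.i.d.\ $\mathrm{Exp}(1)$ variables gives $\tfrac1{K_a}\Ex{\sum_{i\in\mathcal W}|H_i|^2}\to\int_0^\theta\big(-\ln(1-u)\big)\,du=(1-\theta)\ln(1-\theta)+\theta=\alpha(1-\theta,1)$ (see \eqref{eq:sc_nc1k}). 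Dividing by $n$, letting $n\to\infty$, and using $\PT=K_aP$, $\mu=K_a/n$, yields \eqref{eq:sc_conv_3} for all $\theta$.

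For \eqref{eq:sc_conv_5}, reveal to the receiver all of $H_{[K_a]}$ and every codeword $X_i$, $i\ne1$, and subtract them; the residual channel is the single-user AWGN channel $Y=H_1X_1+Z$ with $\|X_1\|^2\le nP$ and known $H_1$, over which $W_1\sim\mathrm{Unif}[M]$ must be recovered inside a list of size $K_a$ with error $\le\epsilon$. By the list-decoding meta-converse \citep[Proposition 3]{tan2014fixed} this is no easier than unique decoding of one of $M/K_a=M_1$ messages over the same channel. Conditioning on $H_1=h$, the received energy is at most $nP|h|^2=\tfrac{\PT}{\mu}|h|^2$, so the minimum-energy converse for the Gaussian channel \citep{polyanskiy2011minimum} gives $\Pb{W_1\notin g(Y)\mid H_1=h}\ge 1-\mathcal Q\big(\mathcal Q^{-1}(1/M_1)-\sqrt{2\tfrac{\PT}{\mu}|h|^2}\big)$; averaging over $h$ gives \eqref{eq:sc_conv_5}. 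As this holds for every blocklength it survives the limit $n\to\infty$.

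The step I expect to require the most care is the first one: (i) invoking the list-decoding meta-converse so that the length-$K_a$ output list is absorbed cleanly into the $M\to M_1$ reduction — without it, a naive per-user Fano leaves an uncancelled $\Theta(K_a\log K_a)$ term and the clean penalties $\epsilon\mu\log(M_1-1)$, $\mu h_2(\epsilon)$ do not emerge; (ii) checking that revealing the strongest users' codewords and then running the original decoder genuinely leaves a sub-problem whose channel is the MAC of exactly the $m$ smallest fading powers and whose per-user error is $\le\epsilon/\theta$; and (iii) justifying the interchange of $n\to\infty$ with the expectation and the $\log(1+\cdot)$ in the order-statistics limit — handled above by Jensen, which reduces it to the single first-moment identity $\Ex{\sum_{i\in\mathcal W}|H_i|^2}/K_a\to\alpha(1-\theta,1)$. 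The single-user energy converse and the final optimization over $\theta$ are routine.
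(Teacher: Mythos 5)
Your proposal is correct and follows essentially the same route as the paper: a genie that reveals the fading coefficients and the codewords of the strongest $(1-\theta)K_a$ users, a per-user Fano inequality adapted to list decoding (which is where the paper's $\log M\to\log M_1$ reduction actually comes from, rather than a separate appeal to the meta-converse — the $(1-\epsilon_i)\log K_a$ term in list-Fano cancels $\log K_a$ exactly, so your worry about an uncancelled $\Theta(K_a\log K_a)$ term is unfounded), the coherent sum-capacity bound with the order-statistics limit $\frac{1}{K_a}\sum_{i\in S_2}|H_i|^2\to\alpha(1-\theta,1)$ for part 1, and the list-decoding version of the Polyanskiy--Poor--Verd\'u minimum-energy converse conditioned on $H_1$ and averaged over fading for part 2. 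Your extra Jensen step on $\log(1+\cdot)$ is a harmless (arguably cleaner) variant of the paper's almost-sure crystallization argument and yields the same limit.
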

\begin{proof}
The proof of \eqref{eq:sc_conv_2}, \eqref{eq:sc_conv_3} is based of Fano inequality and genie argument. Let $W=\left(W_1,...,W_{K_a}\right)$ where $W_{i}\distas{iid}\mathrm{Unif}[M]$ denote the transmitted messages of $K_a$ users. Let $\hat{W}$ be the decoded list of messages. Then $\epsilon= P_e=\frac{1}{K_a}\sum_{j\in[K_a]}\Pb{W_j\notin \hat{W}}$.

Suppose a genie $G$ reveals to the decoder a set $S_1\subset[K_a]$ of transmitted messages $W_{S_1}=\{W_i:i\in S_1\}$ along with corresponding fading coefficients $H_{S_1}=\{H_i:i\in S_1\}$. A converse bound in this case is a converse for the actual problem (when there is no Genie). Hence the equivalent channel at the receiver becomes 
\begin{equation}
\label{eq:sc_conv_6}
Y_G=\sum_{i\in S_2}H_i X_i+Z
\end{equation}
where $S_2=[K_a]\setminus S_1$ with the decoder outputting a list $L_G=L(Y_G,W_{S_1},H_{S_1})$ of messages of size at most $K_a$ and PUPE 
$$P_e^G=\frac{1}{K_a}\sum_{j\in [K_a]}\Pb{W_j\notin L_G}$$.

First note that the optimal decoder (for PUPE) outputs a list of size exactly $K_a$ since otherwise PUPE can be strictly reduced by extending the list to size $K_a$ by adding random messages. Further, it must contain $W_{S_1}$ because if there is $j\in S_1$ such that $W_j\notin L_G$ then replacing one non-transmitted message in $L_G$ by $W_j$ strictly decreases PUPE. Let $E_i=1[W_i\notin L_G]$ and $\epsilon_i^G=\Ex{E_i}$. Note that $\epsilon_i^G=0$ for $i\in S_1$. Now standard Fano type arguments give, for $i\in S_2$,
\begin{\Ieee}{LLL}
\label{eq:sc_conv_7}
I(W_i;L_G)\geq \log M -h_2(\epsilon_i^G)-\epsilon_i^G \log(M-K_a)-(1-\epsilon_i^G)\log K_a.\Ieeen
\end{\Ieee}

Since
$$
I(W_{S_2};L_G)\geq \sum_{i\in S_2}I(W_i;L_G),
$$
we have
\begin{\Ieee}{LLL}
\label{eq:sc_conv_8}
I(W_{S_2};L_G)\geq |S_2|\log M-\sum_{i\in S_2}h_2(\epsilon_i^G)-\log\left(\frac{M}{K_a}-1\right)\sum_{i\in S_2}\epsilon_i^G -|S_2|\log K_a.\Ieeen
\end{\Ieee}

Further,
$$
I(W_{S_2};L_G)\leq n\Ex{\log\left(1+\frac{\PT}{K_a}\sum_{i\in S_2}|H_{i}|^2\right)}.
$$
Let
$$
P_e(S_2)=\frac{1}{|S_2|}\sum_{i\in S_2}\epsilon_i^G.
$$
Then 
$$
\frac{|S_2|}{K_a}P_e(S_2)=P_e^G\leq P_e.
$$
Hence we have
\begin{\Ieee}{LLL}
\label{eq:sc_conv_9}
\frac{n}{K_a}\Ex{\log\left(1+\frac{\PT}{K_a}\sum_{i\in S_2}|H_{i}|^2\right)}&\geq & \frac{|S_2|}{K_a}\log M-\frac{1}{K_a}\sum_{i\in S_2}h_2(\epsilon_i^G)-\log\left(\frac{M}{K_a}-1\right)\frac{|S_2|}{K_a}P_e(S_2) -\frac{|S_2|}{K_a}\log K_a\Ieeen\\
&\geq & \frac{|S_2|}{K_a}\log \frac{M}{K_a}-h_2\left(P_e^G\right)-\log\left(\frac{M}{K_a}-1\right)P_e^G \Ieeen
\end{\Ieee}
where the second inequality follows from Jensen's inequality and the fact that $M_1=\frac{M}{K_a}>2$. Since $P_e^G\leq P_e\leq 1-\frac{K_a}{M}$, $h_2(P_e^G)+\log\left(\frac{M}{K_a}-1\right)P_e^G\leq h_2(P_e)+\log\left(\frac{M}{K_a}-1\right)P_e$. The above equation holds for all $S_2$. Taking limit, with $|S_2|=\theta K_a$ and using results on strong laws of order statistics \citep{van1980strong} (see proof of \citep[Theorem IV.6]{kowshik2019fundamental}) gives the first part of the theorem.

For the second part, we have the following converse for a single user AWGN MAC $Y=X+Z,X,Y\in \mathbb{R}^{\infty}, Z_i\distas{iid}\mathcal{N}(0,1)$. Define an $(E,M,\epsilon)$ code for this channel: codewords $(c_1,...,c_M)$ with $\norm{c_i}^2\leq E$ and a decoder such that probability of error is smaller than $\epsilon$. Then from \citep{polyanskiy2011minimum} we have that any $(E,M,\epsilon)$ code satisfies
\begin{equation}
\label{eq:sc_conv_10}
\frac{1}{M}\geq Q\left(\sqrt{2E}+Q^{-1}\left(1-\epsilon\right)\right).
\end{equation}

Now, if the decoder were to output a list of size at most $K_a$ in the above and the error is defined as the probability that the transmitted message is not in the output list, then from the proof of \eqref{eq:sc_conv_10} in \citep{polyanskiy2011minimum} and ideas of meta-converse for list decoding \citep{tan2014fixed}, it can be easily verified that the above equation modifies to

\begin{equation}
\label{eq:sc_conv_11}
\frac{K_a}{M}\geq Q\left(\sqrt{2E}+Q^{-1}\left(1-\epsilon\right)\right).
\end{equation}

Hence using the ideas in proof of theorem \ref{th:con_main1} to reduce the problem to single user case with list decoding and translating  \eqref{eq:sc_conv_11} to quasi-static case as in the proof of \citep[Theorem IV.6]{kowshik2019fundamental}, the result in \eqref{eq:sc_conv_5} follows.

\end{proof}

Tighter converse bounds can be obtained if further assumptions are made on the codebook. For example, if we assume that each codebook consists of iid entries of the form $\frac{C}{K_a}$ where $C$ is sampled from a distribution with zero mean and finite variance, then we have the following converse bound from \citep[Theorem 3]{reeves2013approximate} (see \citep[Remark 3]{reeves2013approximate} as well).

\begin{theorem}
\label{Th:conv2}
Let $\mu=K_a/n<1$ be the user density and $M= K_a M_1$ be the codebook size such that $M_1>2$, and let the common codebook be generated such that each code symbol iid of the form $\frac{C}{K_a}$ where $C$ is of zero mean and variance $\PT$. Then in order for the codebook to achieve PUPE $\epsilon$ with high probability, the energy-per-bit $\mathcal{E}$ should satisfy
\begin{\Ieee}{LLL}
\label{eq:sc_conv_new1}
\mathcal{E}\geq \inf\frac{\PT}{\mu\log M_1}\Ieeen
\end{\Ieee}
where infimum is taken over all $\PT>0$ that satisfies 
\begin{\Ieee}{LLL}
\label{eq:sc_conv_new2}
h_2 \left(\frac{1}{M_1}\right)-\frac{1}{M_1}h_2(\epsilon)-\left(1-\frac{1}{M_1}\right)h_2\left(\frac{\epsilon}{M_1-1}\right)\leq \left(\mathcal{V}\left(\frac{1}{\mu M_1},\PT\right)-\frac{1}{M_1}\mathcal{V}\left(\frac{1}{\mu},\PT\right)\right)\log e \Ieeen
\end{\Ieee}
where $\mathcal{V}$ is given by \citep{reeves2013approximate}
\begin{\Ieee}{LLL}
\mathcal{V}(r,\gamma)=r\ln\left(1+\gamma-\mathcal{F}(r,\gamma)\right)+\ln\left(1+r\gamma-\mathcal{F}(r,\gamma)\right)-\frac{\mathcal{F}(r,\gamma)}{\gamma}\label{eq:sc_conv_new3a}\Ieeen\\
\mathcal{F}(r,\gamma)=\frac{1}{4}\left(\sqrt{\gamma\left(\sqrt{r}+1\right)^2+1}-\sqrt{\gamma\left(\sqrt{r}-1\right)^2+1}\right)^2\label{eq:sc_conv_new3b}\Ieeen\\
\end{\Ieee}

\end{theorem}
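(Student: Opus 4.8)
The plan is to view the i.i.d.-codebook random-access problem as noisy support recovery in a linear model, to lower bound the relevant mutual information by a Fano argument tailored to the per-user error metric, and to upper bound it via the replica-symmetric characterization of \citep[Theorem 3, Remark 3]{reeves2013approximate}. First I would stack the $M$ codewords as the columns of a matrix $\calC\in\mathbb{C}^{n\times M}$, whose entries are i.i.d.\ of the prescribed form $C/K_a$, and introduce $\beta\in\mathbb{C}^M$ equal to $H_i$ on the coordinate of the $i$-th transmitted message and $0$ elsewhere, so that $Y=\calC\beta+Z$. Then $S=\mathrm{supp}(\beta)$ is uniform over $\binom{[M]}{K_a}$, the nonzero entries of $\beta$ are i.i.d.\ $\cn(0,1)$ and independent of $S$ and $\calC$, and the decoder outputs a size-$K_a$ list $\hat S=g(Y)$ with $\tfrac1{K_a}\Ex{|S\setminus\hat S|}=P_e\le\epsilon$. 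Because $|S|=|\hat S|=K_a$, also $|\hat S\setminus S|=|S\setminus\hat S|$ pointwise, so the expected numbers of misses and of false alarms are both at most $K_a\epsilon$.

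Next I would lower bound $I(S;Y)$. By the data-processing inequality $I(S;Y)\ge I(S;\hat S)=H(S)-H(S\mid\hat S)$, and $H(S)=\log\binom{M}{K_a}=M\,h_2(1/M_1)-O(\log M)$ by Stirling. To bound $H(S\mid\hat S)$ I would split $S$ into its correctly recovered part $S\cap\hat S\subset\hat S$ (of size $K_a-m$) and its missed part $S\setminus\hat S\subset\hat S^{c}$ (of size $m$), where $m=|S\setminus\hat S|$; conditioned on $\hat S$ and on $m$ these two pieces live in sets of sizes $\binom{K_a}{m}$ and $\binom{M-K_a}{m}$. Using $\Ex{m}\le K_a\epsilon$, concavity of $t\mapsto\log\binom{N}{t}$, the estimate $\log\binom{N}{t}\le Nh_2(t/N)$, and $H(m)=O(\log K_a)$, this gives $\tfrac1M H(S\mid\hat S)\le \tfrac1{M_1}h_2(\epsilon)+\big(1-\tfrac1{M_1}\big)h_2\!\big(\tfrac{\epsilon}{M_1-1}\big)+o(1)$, hence
\begin{equation*}
\tfrac1M I(S;Y)\ \ge\ h_2\!\big(\tfrac1{M_1}\big)-\tfrac1{M_1}h_2(\epsilon)-\big(1-\tfrac1{M_1}\big)h_2\!\big(\tfrac{\epsilon}{M_1-1}\big)-o(1),
\end{equation*}
which is exactly the left-hand side of \eqref{eq:sc_conv_new2}.

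For the matching upper bound I would use the chain rule $I(S;Y)=I(\beta;Y)-I(\beta;Y\mid S)$, valid since $\beta$ determines $S$. Here $I(\beta;Y)$ is the mutual information of the full i.i.d.\ linear model with aspect ratio $n/M=1/(\mu M_1)$ and total SNR $\PT$, and $I(\beta;Y\mid S)=I(\beta_S;Y)$ is that of the known-support model, a $K_a$-dimensional i.i.d.\ linear model with aspect ratio $n/K_a=1/\mu$ and the same SNR (this second term is classical random-matrix theory; only the first needs the replica formula). Applying \citep[Theorem 3, Remark 3]{reeves2013approximate} to each model — these hold for i.i.d.\ entries of zero mean and finite variance and are in-probability statements over $\calC$, which is what the ``with high probability'' hypothesis provides — gives, as $n\to\infty$, $\tfrac1M I(\beta;Y)\to\mathcal{V}\!\big(\tfrac1{\mu M_1},\PT\big)\log e$ and $\tfrac1M I(\beta;Y\mid S)\to\tfrac1{M_1}\mathcal{V}\!\big(\tfrac1{\mu},\PT\big)\log e$, with $\mathcal{V}$ and $\mathcal{F}$ as in \eqref{eq:sc_conv_new3a}--\eqref{eq:sc_conv_new3b} ($\mathcal{F}$ being the state-evolution/MMSE fixed point and $\mathcal{V}$ the associated mutual-information density). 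Combining the two displays yields \eqref{eq:sc_conv_new2}. Finally, $\PT=K_aP=\mu nP$ together with $\mathcal{E}=nP/\log M_1$ gives $\mathcal{E}=\PT/(\mu\log M_1)$; since the right-hand side of \eqref{eq:sc_conv_new2} is nondecreasing in $\PT$, its feasible set is a half-line and minimizing $\PT$ over it yields \eqref{eq:sc_conv_new1}.

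The main obstacle is the upper-bound step: one must check that the i.i.d.\ ensemble and normalization used here (symbols $C/K_a$, unit-variance complex noise, sum power $\PT$) are exactly those for which \citep{reeves2013approximate} proves the replica-symmetric formula to be \emph{exact}, and that the ``with high probability'' qualifier coincides with their concentration guarantee — supplying that characterization from scratch, rather than invoking it, would be the bulk of a self-contained proof. The Fano side is routine, but obtaining the sharp constants on the left of \eqref{eq:sc_conv_new2} requires the miss/false-alarm split above rather than the cruder one-shot bound $h_2(2\epsilon/M_1)$.
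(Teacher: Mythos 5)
Your reconstruction is sound and, in fact, more detailed than what the paper provides: the paper gives no proof of Theorem~\ref{Th:conv2} at all, simply asserting that it follows from \citep[Theorem 3, Remark 3]{reeves2013approximate} after casting the random-access problem as support recovery with an i.i.d.\ codebook. Your sketch correctly reconstructs the argument behind that citation — the linear-model reformulation $Y=\calC\beta+Z$ with $S=\mathrm{supp}(\beta)$ uniform on $\binom{[M]}{K_a}$, the Fano lower bound on $I(S;Y)$ with the miss/false-alarm split (which does yield exactly the constants $h_2(1/M_1)-\tfrac1{M_1}h_2(\epsilon)-(1-\tfrac1{M_1})h_2(\epsilon/(M_1-1))$ on the left of \eqref{eq:sc_conv_new2}), and the decomposition $I(S;Y)=I(\beta;Y)-I(\beta;Y\mid S)$ with each term evaluated asymptotically via $\mathcal{V}$. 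One conceptual correction, though: your closing worry that the upper-bound step hinges on the replica-symmetric formula being \emph{exact} is misplaced. For a converse one only needs an \emph{upper} bound on $I(\beta;Y)$, and that is obtained rigorously by replacing the sparse input with a Gaussian input of the same per-coordinate variance (Gaussian maximizes mutual information over a linear channel at fixed second moment) and then invoking the Verd\'u--Shamai/Marchenko--Pastur asymptotics of the resulting log-determinant, of which $\mathcal{V}(r,\gamma)$ with the fixed point $\mathcal{F}(r,\gamma)$ is precisely the limit; the term $I(\beta;Y\mid S)$ is an honest Gaussian channel and needs no bound at all. No replica assumption enters, which is exactly why the paper presents Theorem~\ref{Th:conv2} as a (rigorous, i.i.d.-codebook) converse while keeping the replica-method curve of Section~\ref{sec:opt_dec} as a separate, non-rigorous achievability prediction.
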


\section{Proof of theorem \ref{th:Ka1}}
\label{app:proof}
In this section, we present the proof of theorem \ref{th:Ka1}. We remark that \eqref{eq:Ka6} and \eqref{eq:Ka13} prove \eqref{eq:Ka2a}.

\begin{proof}
Let the common (complex) Gaussian codebook $\mathcal{C}$ of size $M$ and power $P'<P$ be generated, that is, for each $j\in[M]$, generate $c_j\distas{iid}\cn(0,P'I_n)$. Let $W_j$ denote the random (in $[M]$) the message of user $j$. The transmitted channel input is given by $X_j=c_{W_j}1\left\{\norm{c_{W_j}}^2\leq nP\right\}$. Let $K_1\leq K_a$ be the number of messages in the received signal that are decoded. The decoder searches of all $K_1$ sized subsets of $[M]$. The decoder output $g_{D}(Y)\in\mathcal{C}$ is given by
\begin{IEEEeqnarray*}{C}
\label{eq:Ka3}
\hat{C}=\arg\max_{\substack{C\subset \cd \\|C|=K_1}}\norm{P_{\{c:c\in C\}}Y}^2\\
g_D (Y)=  \left\{f^{-1}(c):c\in\hat{C}\right\} \Ieeen\\
\end{IEEEeqnarray*}
where $f$ is the encoding function. The probability of error is given by
\begin{IEEEeqnarray}{LLL}
\label{eq:Ka4}
P_e=\frac{1}{K_2} \sum_{j=1}^{K_2}\Pb{W_j\notin g(Y),\mathrm{\ or\ }\exists i\neq j, W_j= W_i  }.
\end{IEEEeqnarray}

Note that $W_1,...,W_{K_2}$ are sampled independently with replacement from $[M]$. We perform a change of measure by sampling  $W_1,...,W_{K_2}$ from $[M]$ \textit{without} replacement, and also change the measure of transmitted message from $$X_j=c_{W_j}1\left\{\norm{c_{W_j}}^2\leq nP\right\}$$ to $X_j=c^j_{W_j}$. Since $P_e$ is the expectation of a non-negative random variable bounded by $1$, this measure change adds a total variation distance which can bounded by 
$$p_0=\frac{\binom{K_2}{2}}{M}+K_2\Pb{\frac{\chi_2(2n)}{2n}>\frac{P}{P'}}\to 0\quad \textit{as} \quad n\to \infty,$$
where $\chi_2(d)$ is the distribution of sum of squares of $d$ iid standard normal random variables (the chi-square distribution). This follows from the same reasoning used in the main theorem in \citep{polyanskiy2017perspective}.
Henceforth we only consider the new measure. Now, $P_e$ can be bounded as
\begin{equation}
\label{eq:Ka6}
P_e\leq \Ex{\frac{1}{K_2}\sum_{j=1}^{K_2}1[W_j\notin g(Y)]}+p_0 
\leq \frac{K_2-K_1}{K_2}+\frac{1}{K_2}\sum_{t=1}^{K_1}p_{1,t} K_{1,t}+p_0
\end{equation}
where $K_{1,t}$ is given by \eqref{eq:Ka3a} and $p_{1,t}=\Pb{\sum_{j=1}^{K_2}1[W_j\notin g(Y)]=K_{1,t}}$.

Let $F_t=\left\{\sum_{j=1}^{K_2}1[W_j\notin g(Y)] =K_{1,t}\right\}$. W.l.o.g, we will assume that the transmitted message list is $S=[K_2]$ and hence the corresponding codewords are $\{c_1,c_2,...,c_{K_2}\}$. Let $c_{[S_0]}\equiv \{c_i:i\in S_0\}$ and $H_{[S_0]}\equiv \{H_i:i\in S_0\}$, where $S_0\subset [K_2]$. Further, let $c_{[S_1][S_2]}=c_{[S_1\cup S_2]}$. Conditioning on $c_{[K_2]}, H_{[K_2]}$ and $Z$, we have~(\ref{eq:Ka7})

\begin{\Ieee}{LLL}
\label{eq:Ka7}
\Pb{F_t|c_{[K_2]},H_{[K_2]},Z}&\leq  &\Pb{\exists S_0\subset [K_2]: |S|=K_{1,t}, \exists S_1\subset [M]\setminus [K_2]: |S_1|=t : \right.\\
&&\left. \left.  \norm{\Pcc{S_1}{[K_2]\setminus S_0}Y}^2>\max_{\substack{S_2\subset S_0\\|S_2|=t}}\norm{\Pcc{S_2}{[K_2]\setminus S_0}Y}^2  \right | c_{[K_2]},H_{[K_2]},Z }\\
& \leq &\Pb{\bigcup_{\substack{S_0\subset [K_2]\\ |S_0|=K_{1,t}}}\bigcup_{\substack{S_1\subset [M]\setminus [K_2 ]\\|S_1|=t}} \left. F(S_0,S_2^*,S_1,t) \right|c_{[K_2]},H_{[K_2]},Z },\Ieeen
\end{\Ieee}
 
where
$$
F(S_0,S_2^*,S_1,t)=\left\{  \norm{\Pcc{S_1}{[K_2]\setminus S_0}Y}^2>\norm{\Pcc{S_2^*}{[K_2]\setminus S_0}Y}^2\right\},
$$
and $S_2^*\subset S_0$ is a possibly random (depending only on $H_{[K_2]}$) subset of size $t$, to be chosen later. Next we will bound $\Pb{F(S_0,S_2^*,S_1,t)|c_{[K_2]},H_{[K_2]},Z}$.\\

For the sake of brevity, let $A_0=c_{[S_2^*][[K_2]\setminus S_0]}$, $A_1=c_{[[K_2]\setminus S_0]}$ and $B_1=c_{[S_1]}$. We have the following claim which follows from \citep[Claim 1]{kowshik2019fundamental}.

\begin{claim}[{\citep{kowshik2019fundamental}}]
\label{claim:Ka1}
For any $S_1\subset [M]\setminus [K_2]$ with $|S_1|=t$, conditioned on $c_{[K_2]}$, $H_{[K_2]}$ and $Z$, the law of $\norm{\Pcc{S_1}{[K_2]\setminus S_0}Y}^2$ is same as the law of $\norm{P_{A_1}Y}^2+\norm{(I-P_{A_1})Y}^2 \emph{Beta}(t,n-K_1)$ where $\emph{Beta}(a,b)$ is a beta distributed random variable with parameters $a$ and $b$.
\end{claim}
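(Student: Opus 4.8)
The plan is to condition on $c_{[K_2]}$, $H_{[K_2]}$ and $Z$ throughout, so that $Y=\sum_{i\in[K_2]}H_ic_i+Z$ and the subspace $A_1=c_{[[K_2]\setminus S_0]}$ are deterministic with $\dim A_1=K_1-t$ almost surely, while the codewords $\{c_i:i\in S_1\}$ stay iid $\cn(0,P'I_n)$ and, since $S_1\subset[M]\setminus[K_2]$, independent of everything conditioned on. Write $\PPA=I-P_{A_1}$. This is precisely the situation of \citep[Claim 1]{kowshik2019fundamental}, whose short argument I would reproduce.

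First I would split the projection orthogonally. Since the $t$ Gaussian vectors $B_1:=c_{[S_1]}$ almost surely intersect the fixed $(K_1-t)$-dimensional subspace $A_1$ only in $\{0\}$, the span $c_{[S_1\cup([K_2]\setminus S_0)]}$ equals $A_1\oplus\tilde B_1$ with $\tilde B_1:=\PPA B_1$, and $P_{\tilde B_1}P_{A_1}=0$, so
\begin{equation*}
\norm{\Pcc{S_1}{[K_2]\setminus S_0}Y}^2=\norm{P_{A_1}Y}^2+\norm{P_{\tilde B_1}\PPA Y}^2\,.
\end{equation*}
The first term already matches the claim, so it only remains to identify the conditional law of $\norm{P_{\tilde B_1}\PPA Y}^2$.

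Next I would pass to the $m:=n-(K_1-t)$-dimensional space $\mathrm{range}(\PPA)$. There $v:=\PPA Y$ is fixed with $\norm v^2=\norm{(I-P_{A_1})Y}^2$, while $\tilde B_1$ is the span of the $t$ vectors $\PPA c_i$, $i\in S_1$, which are iid $\cn(0,P'\PPA)$, i.e. isotropic on $\mathrm{range}(\PPA)$; hence by unitary invariance $\tilde B_1$ is a Haar-distributed $t$-subspace of this $m$-dimensional space, independent of $v$. Invoking the standard fact that the fraction of a fixed vector's energy captured by a Haar $t$-subspace of $\mathbb{C}^m$ is $\mathrm{Beta}(t,m-t)$-distributed (via the representation $G_1/(G_1+G_2)$ with independent $G_1\sim\mathrm{Gamma}(t,1)$, $G_2\sim\mathrm{Gamma}(m-t,1)$, after rotating $v$ to a coordinate axis), we get
\begin{equation*}
\norm{P_{\tilde B_1}\PPA Y}^2\distas{}\norm{(I-P_{A_1})Y}^2\,\mathrm{Beta}(t,m-t)=\norm{(I-P_{A_1})Y}^2\,\mathrm{Beta}(t,n-K_1)\,,
\end{equation*}
with the Beta variable independent of the conditioning since it depends only on the extra randomness in $\{c_i:i\in S_1\}$. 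Combining the two displays yields the claim.

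The main obstacle is not a single hard estimate but the almost-sure genericity bookkeeping: that $t\le m$ so $\dim\tilde B_1=t$, that $\dim A_1=K_1-t$, and that $A_1\cap B_1=\{0\}$, together with a clean justification that $\tilde B_1$ is Haar on the Grassmannian of $\mathrm{range}(\PPA)$ and independent of $v$ -- which is where isotropy of $\cn(0,P'\PPA)$ and unitary invariance enter. As this is exactly \citep[Claim 1]{kowshik2019fundamental}, in the final write-up I would either repeat that argument or simply cite it.
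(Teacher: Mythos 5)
Your proof is correct. The paper itself offers no proof of this claim beyond citing \citep[Claim 1]{kowshik2019fundamental}, and your reconstruction — orthogonally splitting the projection as $\norm{P_{A_1}Y}^2+\norm{P_{\tilde B_1}\PPA Y}^2$ and then using that $\tilde B_1$ is a Haar-random $t$-subspace of the $n'$-dimensional range of $\PPA$, so that the captured energy fraction is $\mathrm{Beta}(t,n-K_1)$ — is exactly the standard argument behind that cited claim, with the dimension bookkeeping ($\dim A_1=K_1-t$, $m=n-K_1+t$) done correctly.
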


Therefore we have,
\begin{\Ieee}{LLL}
\label{eq:Ka8}
\Pb{F(S_0,S_2^*,S_1,t)|c_{[K_2]},H_{[K_2]},Z}
= \Pb{Beta(n-K_1,t)<G_{S_0}|c_{[K_2]},H_{[K_2]},Z }=F_{\beta}\left(G_{S_0};n-K_1,t\right)\Ieeen
\end{\Ieee}
where 
\begin{\Ieee}{LLL}
\label{eq:Ka9}
G_{S_0}=\frac{\norm{Y}^2-\norm{P_{A_0}Y}^2}{\norm{Y}^2-\norm{P_{A_1}Y}^2}.\Ieeen
\end{\Ieee}
Since $t\geq 1$, we have $F_\beta \left(G_{S_0};n-K_1,t\right)\leq \binom{n'-1}{t-1}G_{S_0}^{n-K_1}$, where $n'$ is given by \eqref{eq:Ka3f}.

Let us denote $\bigcup_{\substack{S_0\subset[K_2]\\|S|=K_{1,t}}}$ as $\bigcup_{S_0,K_1}$; similarly for $\sum$ and $\bigcap$ for the ease of notation. Using the above claim, we get,

\begin{\Ieee}{LLL}
\label{eq:Ka10}
\Pb{F_t|c_{[K_2]},H_{[K_2]},Z}\leq \sum_{S_0,K_1}\binom{M-K_2}{t}\binom{n'-1}{t-1}G_{S_0}^{n-K_1}.\Ieeen
\end{\Ieee}

Therefore $p_{1,t}$ can be bounded as
{\allowdisplaybreaks
\begin{\Ieee}{LLL}
\label{eq:sc_Ka11}
p_{1,t} & = \Pb{ F_t} \\
&\leq \Ex{\min\left\{1,\sum_{S_0,K_1}\binom{M-K_2}{t}\binom{n'-1}{t-1}G_{S_0}^{n-K_1} \right\}}\\
&=  \Ex{\min\left\{1,\sum_{S_0,K_1} e^{(n-K_1)(s_t+R_1)} G_{S_0}^{n-K_1} \right\}}\Ieeen
\end{\Ieee}
}
where $s_t$ and $R_1$ are given by \eqref{eq:Ka3d} and \eqref{eq:Ka3e} respectively.

For $\delta>0$, define $V_{n,t}$ as in \eqref{eq:Ka3b}. Let $E_1$ be the event
\begin{\Ieee}{LLL}
\label{eq:Ka12}
E_1 &=& \bigcap_{S_0,K_1}\left\{-\ln G_{S_0} - s_t-R_1>\delta\right\}\\
&=& \bigcap_{S_0,K_1}\left\{G_{S_0}<V_{n,t}\right\}.\Ieeen
\end{\Ieee}

Let $p_{2,t}=\Pb{\bigcup_{S_0,K_1}\left\{G_{S_0}>V_{n,t}\right\}}$. Then

{\allowdisplaybreaks
\begin{\Ieee}{LLL}
\label{eq:Ka13}
p_{1,t} &\leq & \Ex{\min\left\{1,\sum_{S_0,K_1}e^{(n-K_1)(s_t+R_1)}G_{S_0}^{n-K_1}\right\}\left(1[E_1]+1[E_1^c]\right)}\\
&\leq &  \Ex{\sum_{S_0,K_1} e^{-(n-K_1)\delta}}+p_{2,t}\\
&=& \binom{K_2}{K_{1,t}} e^{-(n-K_1)\delta}+p_{2,t}.\IEEEyesnumber
\end{\Ieee}
}

\textbf{Note:} This proves \eqref{eq:Ka2a}. 

Let us bound $p_{2,t}$. Let $\hat{Z}=Z+\sum_{i\in S_0\setminus S^*_2}H_i c_i$. From \citep[Claim 2]{kowshik2019fundamental} we have

\begin{claim}[{\citep{kowshik2019fundamental}}]
\label{claim:Ka2} $p_{2,t}$ is bounded as
\begin{IEEEeqnarray*}{LLL}
\label{eq:Ka14}
 p_{2,t} &=&\Pb{\bigcup_{S_0,K_1}\left\{G_{S_0}>V_{n,t}\right\}}\\
& \leq & \Pb{\bigcup_{S_0,K_1}\left\{\norm{(1-V_{n,t})\PPA \hat{Z}-V_{n,t}\PPA \sum_{i\in S_2^*}H_i c_i}^2 \geq V_{n,t}\norm{\PPA \sum_{i\in S^*_2}H_i c_i}^2\right\}}.\IEEEyesnumber
\end{IEEEeqnarray*}
\end{claim}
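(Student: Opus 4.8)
The plan is to prove the set inclusion $\{G_{S_0}>V_{n,t}\}\subseteq\{\text{the stated event}\}$ pointwise in the codewords $c_{[K_2]}$, the fading realization $H_{[K_2]}$ (which also fixes the adaptively chosen set $S_2^*$) and the noise $Z$; taking probabilities and a union bound over the choices of $S_0$ then yields the claim. First I would rewrite $G_{S_0}$ via orthogonal projections: since $\norm{Y}^2-\norm{P_A Y}^2=\norm{P_A^\perp Y}^2$ for any finite $A\subset\mathbb{C}^n$, we have $G_{S_0}=\norm{P_{A_0}^\perp Y}^2/\norm{\PPA Y}^2$ with $A_0=c_{[S_2^*][[K_2]\setminus S_0]}$ and $A_1=c_{[[K_2]\setminus S_0]}$. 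Because $A_1\subseteq A_0$, the range of $P_{A_0}^\perp$ is contained in that of $\PPA$, so $P_{A_0}^\perp=P_{A_0}^\perp\PPA$; in particular $G_{S_0}\le 1$, which disposes of the degenerate case $V_{n,t}\ge 1$, and $P_{A_0}^\perp$ annihilates each $\PPA c_i$ with $i\in S_2^*$, since both $c_i$ and $P_{A_1}c_i$ lie in $\mathrm{span}(A_0)$.

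The key step is then to decompose the projected received signal. Using $\PPA c_i=0$ for $i\in[K_2]\setminus S_0$ one obtains $\PPA Y=\PPA\sum_{i\in S_2^*}H_ic_i+\PPA\hat{Z}$ with $\hat{Z}=Z+\sum_{i\in S_0\setminus S_2^*}H_ic_i$, and applying $P_{A_0}^\perp$ kills the first summand, so that $P_{A_0}^\perp Y=P_{A_0}^\perp\PPA\hat{Z}$. Since $P_{A_0}^\perp$ is an orthogonal projection, $\norm{P_{A_0}^\perp Y}^2\le\norm{\PPA\hat{Z}}^2$, and hence $G_{S_0}>V_{n,t}$ forces $\norm{\PPA\hat{Z}}^2\ge V_{n,t}\norm{\PPA Y}^2$. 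I expect this ``the $S_2^*$-interference drops out of $P_{A_0}^\perp Y$'' step — together with keeping the nested spans $\mathrm{span}(A_1)\subseteq\mathrm{span}(A_0)$ and the adaptivity of $S_2^*$ straight — to be the only real obstacle; everything else is routine.

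Finally I would convert $\norm{\PPA\hat{Z}}^2\ge V_{n,t}\norm{\PPA Y}^2$ into the stated quadratic form using the elementary identity
\begin{equation*}
\norm{(1-V_{n,t})W-V_{n,t}U}^2-V_{n,t}\norm{U}^2=(1-V_{n,t})\bigl(\norm{W}^2-V_{n,t}\norm{W+U}^2\bigr)
\end{equation*}
with $W=\PPA\hat{Z}$ and $U=\PPA\sum_{i\in S_2^*}H_ic_i$, so that $W+U=\PPA Y$; since $V_{n,t}=e^{-\tilde{V}_{n,t}}\in(0,1)$ the prefactor $1-V_{n,t}$ is positive and the two events coincide. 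Unioning over $S_0$ and passing to probabilities completes the argument. This is precisely \citep[Claim 2]{kowshik2019fundamental}, so in the write-up I would cite it and recall the above sketch.
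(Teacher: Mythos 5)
Your argument is correct: the reduction $G_{S_0}=\norm{P_{A_0}^{\perp}Y}^2/\norm{\PPA Y}^2$, the observation that $P_{A_0}^{\perp}Y=P_{A_0}^{\perp}\PPA\hat{Z}$ so that $\norm{P_{A_0}^{\perp}Y}^2\le\norm{\PPA\hat{Z}}^2$, and the algebraic identity converting $\norm{W}^2\ge V\norm{W+U}^2$ into the stated quadratic form (valid since $V_{n,t}\in(0,1)$) all check out, and the pointwise event inclusion handles the adaptivity of $S_2^*$ automatically. This is essentially the same derivation as in the cited reference, which the paper invokes without reproducing, so nothing further is needed beyond the citation and your sketch.
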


Let $\chi'_2(\lambda,d)$ denote the non-central chi-squared distributed random variable with non-centrality $\lambda$ and degrees of freedom $d$. That is, if $W_i\distas{}\mathcal{N}(\mu_i,1),i\in[d]$ and $\lambda=\sum_{i\in[d]}\mu_i^2$, then $\chi'_2(\lambda,d)$ has the same distribution as that of $\sum_{i\in[d]}W_i^2$. We have the following claim from \citep[Claim 3]{kowshik2019fundamental}.

\begin{claim}[{\citep{kowshik2019fundamental}}]
\label{claim:Ka3}
Conditional on $H_{[K_2]}$ and $A_0$,
 \begin{IEEEeqnarray*}{LLL}
\label{eq:Ka15a}
\norm{\PPA\left(\hat{Z}-\frac{V_{n,t}}{1-V_{n,t}}\hcc\right)}^2 
\distas{}\phso\frac{1}{2}\chi'_2\left(2F,2 n'\right)\IEEEyesnumber
\end{IEEEeqnarray*}
where 
\begin{IEEEeqnarray}{LLL}
F=\frac{\norm{\frac{V_{n,t}}{1-V_{n,t}}\PPA\hcc}^2}{\phso}\label{eq:Ka15b}\\
\end{IEEEeqnarray}

Hence its conditional expectation is
\begin{equation}
\label{eq:eq:Ka15c}
\mu=n'+F.
\end{equation}
\end{claim}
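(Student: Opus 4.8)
The plan is to condition on $H_{[K_2]}$ and $A_0=c_{[S_2^*][[K_2]\setminus S_0]}$ and sort out which quantities are then deterministic and which remain random. First I would rewrite
\[
\hat{Z}-\frac{V_{n,t}}{1-V_{n,t}}\hcc=G+v,\qquad
G:=Z+\sum_{i\in S_0\setminus S_2^*}H_i c_i,\qquad
v:=-\frac{V_{n,t}}{1-V_{n,t}}\hcc .
\]
Under the conditioning, $\hcc$ and $A_1=c_{[[K_2]\setminus S_0]}\subset A_0$ are fixed, hence $v$ and $\PPA=P^{\perp}_{A_1}$ are deterministic. The structural point is that the residual $G$ is, conditionally, an \emph{isotropic} complex Gaussian: $Z\distas{}\cn(0,I_n)$ is independent of all codewords and fading; and for $i\in S_0\setminus S_2^*$ the index $i$ lies outside $S_2^*\cup([K_2]\setminus S_0)$, so by the i.i.d.\ codebook construction the $c_i$ are still i.i.d.\ $\cn(0,P'I_n)$, independent of $A_0$ and of $A_1$. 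Adding independent complex Gaussians gives $G\mid(H_{[K_2]},A_0)\distas{}\cn(0,\phso I_n)$, independent of the fixed projector $\PPA$.

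Next I would project and compute. Since $A_1$ consists of $K_1-t$ codewords, its span is a.s.\ exactly $(K_1-t)$-dimensional, so $\mathrm{range}(\PPA)$ has dimension $n'=n-K_1+t$. Isotropy of $G$ implies that, in an orthonormal basis of $\mathrm{range}(\PPA)$, $\PPA G$ has $n'$ i.i.d.\ $\cn(0,\phso)$ coordinates while $\PPA v$ adds fixed shifts. Splitting each coordinate into real and imaginary parts, $\tfrac{2}{\phso}\norm{\PPA(\hat{Z}-\frac{V_{n,t}}{1-V_{n,t}}\hcc)}^2=\tfrac{2}{\phso}\norm{\PPA G+\PPA v}^2$ is a sum of $2n'$ squared unit-variance real Gaussians whose squared means total $\tfrac{2}{\phso}\norm{\PPA v}^2=\tfrac{2}{\phso}\,\frac{V_{n,t}^2}{(1-V_{n,t})^2}\norm{\PPA\hcc}^2=2F$. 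This is exactly the statement that $\tfrac{1}{\phso}\norm{\PPA(\hat{Z}-\frac{V_{n,t}}{1-V_{n,t}}\hcc)}^2\distas{}\tfrac12\chi'_2(2F,2n')$, i.e.\ \eqref{eq:Ka15a}; and the expectation claim follows from $\Ex{\chi'_2(\lambda,d)}=\lambda+d$, which gives conditional mean $n'+F$.

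The only genuinely delicate step is the independence bookkeeping at the start: one must check that conditioning on $A_0$ and $H_{[K_2]}$ leaves the law of the codewords indexed by $S_0\setminus S_2^*$ undisturbed, and that these together with $Z$ are jointly independent of both $\PPA$ and $v$; this is precisely where the disjointness of $S_0\setminus S_2^*$ from $S_2^*\cup([K_2]\setminus S_0)$ and the i.i.d.\ generation of the common codebook are used. Everything after that is the elementary fact that a projected isotropic complex Gaussian plus a fixed vector is a scaled non-central $\chi^2$, together with the dimension count $n'=n-K_1+t$. Since this is \citep[Claim 3]{kowshik2019fundamental}, I would simply reproduce that short computation.
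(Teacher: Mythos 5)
Your argument is correct and is essentially the same (standard) computation that the paper delegates to the citation of \citep[Claim 3]{kowshik2019fundamental}: conditionally on $H_{[K_2]}$ and $A_0$, the vector $\hat{Z}$ is isotropic $\cn(0,\phso I_n)$ and independent of the fixed projector $\PPA$ and the fixed shift $-\frac{V_{n,t}}{1-V_{n,t}}\hcc$, so the projected squared norm is a scaled non-central chi-square with $2n'$ real degrees of freedom and non-centrality $2F$. Your independence bookkeeping (disjointness of $S_0\setminus S_2^*$ from $S_2^*\cup([K_2]\setminus S_0)$, and $S_2^*$ being $H_{[K_2]}$-measurable) and the dimension count $n'=n-K_1+t$ are exactly the points that need checking, and you handle both correctly.
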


Now let
\begin{IEEEeqnarray}{LLL}
T=\frac{1}{2}\chi'_2(2F,2n')-\mu\label{eq:Ka16}\\
U=\frac{V_{n,t}}{(1-V_{n,t})}\frac{\norm{\PPA\hcc}^2}{\phso}-n' \label{eq:Ka17}\\
U^1=\frac{1}{1-V_{n,t}}\left(V_{n,t}W_{S_0}-1\right)\label{eq:Ka18}\IEEEeqnarraynumspace
\end{IEEEeqnarray}
where $W_{S_0}= \left(1+\frac{\norm{\PPA\hcc}^2}{n'\phso}\right)$. Notice that $U=n'U^1$ and $F=\frac{V_{n,t}}{1-V_{n,t}}n'(1+U^1)$.

Then we have \eqref{eq:Ka19}.
 
\begin{\Ieee}{LLL}
\label{eq:Ka19}
\mathrm{RHS\ of\ \eqref{eq:Ka14}}
&=&\Pb{\bigcup_{S_0,K_1}\left\{\norm{\PPA \hat{Z}-\frac{V_{n,t}}{(1-V_{n,t})}\PPA \hcc}^2 -\mu \geq U\right\}} \\
&=&\Pb{ \bigcup_{S_0,K_1}\left\{T\geq U\right\} }. \Ieeen
\end{\Ieee}
 
Now, let $\delta_1>0$, and $E_2=\cap_{S_0,K_1}\left\{U^1>\delta_1\right\}$. Taking expectations over $E_1$ and its complement, we have
\begin{\Ieee}{LLL}
\label{eq:Ka20}
\Pb{ \bigcup_{S_0,K_1}\left\{T\geq U\right\} } &\leq &\sum_{S_0,K_1} \Pb{T>U, U^1>\delta_1}+\Pb{E_2^c}\\
&= &\sum_{S_0,K_1}\Ex{\Pb{\left.T>U\right| H_{[K_2]},A_0}1[U^1>\delta_1]}+\Pb{E_2^c}\Ieeen
\end{\Ieee}
which follows from the fact that $\{U^1> \delta_1\}\in \sigma(H_{[K_2]},A_0)$. To bound this term, we use the following concentration result from \citep[Lemma 8.1]{birge2001alternative}.

\begin{lemma}[\citep{birge2001alternative}]
\label{lem:chi2}
Let $\chi=\chi_2'(\lambda,d)$ be a non-central chi-squared distributed variable with $d$ degrees of freedom and non-centrality parameter $\lambda$. Then $\forall x>0$
\begin{equation}
\label{eq:Ka21}
\begin{split}
& \Pb{\chi-(d+\lambda)\geq 2\sqrt{(d+2\lambda)x}+2x}\leq e^{-x}\\
& \Pb{\chi-(d+\lambda)\leq  -2\sqrt{(d+2\lambda)x}} \leq e^{-x}
\end{split}
\end{equation}
\end{lemma}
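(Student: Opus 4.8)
The plan is to prove this by the usual Chernoff (exponential Markov) argument; the only non-routine ingredient is a sharp bound on the cumulant generating function of a non-central $\chi^2$ law. Realize $\chi\distas{}\sum_{i=1}^{d}W_i^2$ with $W_i\distas{}\mathcal{N}(\mu_i,1)$ independent and $\sum_{i=1}^d\mu_i^2=\lambda$. Completing the square inside the Gaussian integral gives, for $s<\tfrac12$,
\begin{equation}
\Ex{\Exp{sW_i^2}}=\frac{1}{\sqrt{1-2s}}\,\Exp{\frac{s\mu_i^2}{1-2s}},
\end{equation}
and, multiplying over $i$ and centering, the log-moment generating function $\psi(s)=\ln\Ex{\Exp{s(\chi-(d+\lambda))}}$ works out to
\begin{equation}
\psi(s)=\frac{d}{2}\bigl(-\ln(1-2s)-2s\bigr)+\lambda\,\frac{2s^2}{1-2s},\qquad 0\le s<\tfrac12 .
\end{equation}

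Next I would bound both pieces by elementary logarithmic inequalities. For the right tail, using $-\ln(1-u)-u\le\tfrac{u^2}{2(1-u)}$ on $[0,1)$ with $u=2s$ controls the first term by $\tfrac{ds^2}{1-2s}$, while the second term already has this shape; writing $v:=d+2\lambda$ this yields the Bernstein-type bound $\psi(s)\le\tfrac{vs^2}{1-2s}$ for $0\le s<\tfrac12$. For the left tail I would instead evaluate the same log-MGF at a negative argument $s=-r$ (with $r>0$, where it is finite for all $r$) and use $-\ln(1+u)+u\le\tfrac{u^2}{2}$ with $u=2r$, obtaining $\ln\Ex{\Exp{-r(\chi-(d+\lambda))}}\le vr^2$ for all $r>0$; in other words $\chi-(d+\lambda)$ is sub-Gaussian on the left with variance proxy $2v$.

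Finally I would convert the two cumulant bounds into tail bounds. For the right tail, $\Pb{\chi-(d+\lambda)\ge\tau}\le\exp\bigl(\inf_{0<s<1/2}\{-s\tau+\tfrac{vs^2}{1-2s}\}\bigr)$, and since $\tfrac{vs^2}{1-2s}=\tfrac{(2v)s^2/2}{1-2s}$ is exactly the generic Bernstein form with scale parameter $2$ and variance parameter $2v$, the standard optimization over $s$ gives $\Pb{\chi-(d+\lambda)\ge\sqrt{2(2v)x}+2x}=\Pb{\chi-(d+\lambda)\ge 2\sqrt{(d+2\lambda)x}+2x}\le e^{-x}$, which is the first inequality. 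For the left tail the sub-Gaussian bound gives $\Pb{\chi-(d+\lambda)\le-\sqrt{2(2v)x}}=\Pb{\chi-(d+\lambda)\le-2\sqrt{(d+2\lambda)x}}\le e^{-x}$, the second inequality.

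The only delicate point is the last step, the optimization over the exponential parameter $s$: either one carries out the one-line calculus minimization (balancing the linear and quadratic contributions, which is what forces the $\sqrt{vx}+x$-type threshold), or one simply invokes the packaged Bernstein lemma for cumulant bounds of the form $\tfrac{\sigma^2 s^2/2}{1-cs}$. Everything upstream --- the MGF identity and the two logarithmic inequalities --- is routine and needs to be verified only once.
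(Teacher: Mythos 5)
Your argument is correct: the MGF identity, the centered cumulant bound $\psi(s)\le\frac{(d+2\lambda)s^2}{1-2s}$ for $0\le s<\tfrac12$ (resp.\ $\psi(-r)\le(d+2\lambda)r^2$ for $r>0$), and the Bernstein/sub-Gaussian inversions all check out and yield exactly the two stated tails. The paper does not prove this lemma at all --- it simply imports it from Birg\'e (2001) --- and your derivation is essentially the standard proof given in that reference (and in Laurent--Massart), so there is nothing to reconcile.
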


Hence, for $x>0$, we have
\begin{equation}
\label{eq:Ka22a}
\Pb{\chi-(d+\lambda)\geq x}\leq e^{-\frac{1}{2}\left(x+d+2\lambda-\sqrt{d+2\lambda}\sqrt{2x+d+2\lambda}\right)}.
\end{equation}
and for $x<(d+\lambda)$, we have
\begin{equation}
\label{eq:Ka22b}
\Pb{\chi\leq x}\leq e^{-\frac{1}{4}\frac{(d+\lambda-x)^2}{d+2\lambda}}.
\end{equation}

Observe that, in \eqref{eq:Ka22a}, the exponent is always negative for $x>0$ and finite $\lambda$ due to AM-GM inequality. When $\lambda=0$, we can get a better bound for the lower tail in \eqref{eq:Ka22b} by using \citep[Lemma 25]{reeves2012sampling}.
\begin{lemma}[\citep{reeves2012sampling}]
Let $\chi=\chi_2(d)$ be a chi-squared distributed variable with $d$ degrees of freedom. Then $\forall x>1$
\begin{\Ieee}{LLL}
\label{eq:Ka22c}
\Pb{\chi\leq \frac{d}{x}}\leq e^{-\frac{d}{2}\left(\ln x +\frac{1}{x}-1\right)}\Ieeen
\end{\Ieee}
\end{lemma}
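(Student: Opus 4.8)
The plan is to apply a Chernoff (exponential Markov) bound to the lower tail of $\chi$. Write $\chi = \sum_{i=1}^d W_i^2$ with $W_i \distas{iid}\mathcal{N}(0,1)$. For any $t>0$, applying Markov's inequality to the nonnegative random variable $e^{-t\chi}$ gives
\begin{equation*}
\Pb{\chi \le \tfrac{d}{x}} = \Pb{e^{-t\chi} \ge e^{-td/x}} \le e^{td/x}\,\Ex{e^{-t\chi}}.
\end{equation*}
Since the $W_i$ are iid and $\Ex{e^{-tW_1^2}} = (1+2t)^{-1/2}$ for $t>-\tfrac12$ (a one-line Gaussian integral), we get $\Ex{e^{-t\chi}} = (1+2t)^{-d/2}$ and hence $\Pb{\chi \le d/x} \le \Exp{\tfrac{td}{x} - \tfrac{d}{2}\ln(1+2t)}$ for every $t>0$.

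The next step is to optimize the exponent $g(t) = \tfrac{td}{x} - \tfrac{d}{2}\ln(1+2t)$ over $t>0$. Setting $g'(t) = \tfrac{d}{x} - \tfrac{d}{1+2t} = 0$ yields $1+2t = x$, i.e.\ the minimizer is $t^\star = \tfrac{x-1}{2}$, which is strictly positive precisely because $x>1$ — this is exactly the point where the hypothesis $x>1$ is used, and also ensures $t^\star > -\tfrac12$ so the moment generating function is finite there. Substituting $t^\star$ back gives
\begin{equation*}
\Pb{\chi \le \tfrac{d}{x}} \le \Exp{\tfrac{d}{2}\left(1 - \tfrac{1}{x} - \ln x\right)} = \Exp{-\tfrac{d}{2}\left(\ln x + \tfrac{1}{x} - 1\right)},
\end{equation*}
which is the claimed bound.

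There is essentially no serious obstacle here; the only things worth verifying are the feasibility of $t^\star$ (done above) and, if one wants to emphasize that the estimate is genuinely exponentially decaying, that $\ln x + \tfrac1x - 1 > 0$ for all $x\neq 1$, which is immediate from strict convexity of $-\ln$. In the present paper this lemma is invoked only as a black box — it sharpens the $\lambda=0$ instance of the lower-tail estimate \eqref{eq:Ka22b} — so one could equally well just cite \citep[Lemma 25]{reeves2012sampling} and omit the short computation above.
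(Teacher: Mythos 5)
Your proof is correct: the Chernoff bound with $\Ex{e^{-t\chi}}=(1+2t)^{-d/2}$, optimized at $t^\star=\tfrac{x-1}{2}$ (feasible exactly when $x>1$), yields the stated exponent $-\tfrac{d}{2}\left(\ln x+\tfrac{1}{x}-1\right)$ after a correct substitution. The paper itself gives no proof of this lemma — it is invoked purely as a black box citation to \citep[Lemma 25]{reeves2012sampling} — so there is nothing internal to compare against; your derivation is the standard one and fills that gap cleanly, including the observation of where the hypothesis $x>1$ enters.
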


Therefore, from \eqref{eq:Ka14}, \eqref{eq:Ka19}, \eqref{eq:Ka20} and \eqref{eq:Ka22a},  we have

\begin{IEEEeqnarray*}{LLL}
\label{eq:Ka23}
p_{2,t}\leq \sum_{S_0,K_1}\Ex{e^{-n'f_n(U^1)}1[U^1>\delta_1]}+\Pb{\bigcup_{S_0,K_1}\left\{U^1\leq \delta_1\right\}}\Ieeen
\end{IEEEeqnarray*}
where $f_n$ is given by \eqref{eq:Ka3g}.

Next, from \citep[Claim 4]{kowshik2019fundamental} we have that for $0<V_{n,t}<1$ and $x>0$, $f_n(x)$ is a monotonically increasing function of $x$. From this, we obtain
\begin{IEEEeqnarray*}{LLL}
\label{eq:Ka24}
p_{2,t}\leq \sum_{S_0,K_1}e^{-n'f_n(\delta_1)}+p_{3,t}\Ieeen
\end{IEEEeqnarray*}
where $p_{3,t}=\Pb{E_2^c}$.

Note that
 \begin{IEEEeqnarray}{LLL}
\label{eq:Ka25}
p_{3,t}=\Pb{E_2^c}= \Pb{\bigcup_{S_0,K_1}\left\{V_{n,t}W_{S_0}-1\leq \delta_1(1-V_{n,t})\right\}}.
 \end{IEEEeqnarray}
 Conditional on $H_{[K_2]}$, $\norm{\PPA\hcc}^2\distas{}\frac{1}{2}P'\sum_{i\in S_2^*}|H_i|^2\chi_2^{S_2^*}(2n')$, where $\chi_2(2n')$ is a chi-squared distributed random variable with $2n'$ degrees of freedom (here the superscript $S_2^*$ denotes the fact that this random variable depends on the codewords corresponding to $S_2^*$). For $1>\delta_2>0$, consider the event $E_4=\bigcap_{S_0,K_1} \left\{\frac{\chi_2^{S_2^*}(2n')}{2n'}> 1-\delta_2\right\}$. Using \eqref{eq:Ka22c}, we can bound $p_{3,t}$ as 
 
 \begin{IEEEeqnarray*}{LLL}
 \label{eq:Ka26}
 p_{3,t}\leq \sum_t\binom{K_2}{K_{1,t}} e^{-n'(-\ln(1-\delta_2)-\delta_2)}+p_{4,t} \IEEEyesnumber
 \end{IEEEeqnarray*}
 where 
 \begin{IEEEeqnarray*}{LLL}
 \label{eq:Ka27}
 p_{4,t}= \Pb{E_4^c}=\Pb{\bigcup_{S_0,K_1}\left\{V_{n,t}\left(1+\frac{P'\sum_{i\in S_2^*}|H_i|^2(1-\delta_2)}{\phso}\right)\leq  1+\delta_1(1-V_{n,t})\right\}}.\IEEEyesnumber
\end{IEEEeqnarray*}

We make an important observation here. The union bound over $S_0$ is the minimum over $S_0$, and it can be seen that optimum $S_0$ i.e, the minimizer  should be contiguous amongst the indices arranged according the decreasing order of fading powers. Then the best upper bound is got by choosing $S_2^*$ to be correspond to the top $t$ fading powers in $S_0$. Hence, we get

\begin{\Ieee}{LLL}
\label{eq:Ka28}
p_4=& \Pb{\min_{1\leq i\leq K_1-t+1} \frac{P'\sum_{j=i}^{i+t-1}|H_{(j)}|^2}{1+P'\sum_{j=i+t}^{K_{1,t}-1+i}|H_{(j)}|^2}\leq \frac{(1+\delta_1(1-V_{n,t}))V_{n,t}^{-1} -1}{1-\delta_2}}\Ieeen
\end{\Ieee}

Finally, combining \eqref{eq:Ka6}, \eqref{eq:Ka13}, \eqref{eq:Ka24}, \eqref{eq:Ka26} and \eqref{eq:Ka28} , and optimizing over $\delta$, $\delta_1$ and $\delta_2$, we are done.\\

\end{proof}

\section{Results for blind slot decoding}\label{app:slot}
Here we present the numerical results for blind slot decoding. Let us fix the following parameters:
\begin{itemize}
    \item $[400, 100]$ LDPC code for $4$-user case, obtained by PEXIT method in \citep{10.1007/978-3-030-01168-0_15};
    \item $25$ outer iterations, $50$ inner (LDPC) iterations;
    \item $T = 4$, which means that we can decode at most $4$ users in a slot;
\end{itemize}

We present the curves for $2$, $3$ and $4$ users, recall, that $T = 4$ for all the cases. We compare these curves with the following ``ideal'' curves
\begin{itemize}
\item fading channel coefficients are unknown, number of users is known (i.e. $T$ is selected to be equal to the actual number of users);
\item fading channel coefficients are known, number of users is known (full CSI).
\end{itemize}
Frame error rate performance for listed above scenarios are presented on~\Fig{fig:K2}, \Fig{fig:K3} and \Fig{fig:K4} for $K=2$, $3$, $4$ respectively. We see, that the performance curves for our coding scheme coincide with ``ideal'' curves and achievability bound and very close (the loss is less, than $2$~dB) to the converse bound. So we conclude, that LDPC-based scheme is good for resolving collisions of small order. 

\newcommand{
\begin{figure}[ht]
\centering
\def\NumUsers{}
\input{tikz/fer_blind/single_column.tex}
\begin{tikzpicture}[define rgb/.code={\definecolor{mycolor}{RGB}{#1}},
                    rgb color/.style={define rgb={#1},mycolor}]
\begin{axis}[
    width=\FigureWidth, height=\FigureHeight,
    ymode=log,
    xlabel={${E_b}/{N_0}$, dB},
    ylabel={$P_e$},
    xmin=6,    xmax=22.1,
    ymin=6e-3, ymax=0.4,
    legend cell align={left},
    legend style={at={(0.99,0.99)}, anchor=north east},
    axis line style={latex-latex},
    grid=both,
    grid style={line width=.1pt, draw=gray!20},
    major grid style={line width=.2pt,draw=gray!50},
    tick align=inside,
    tickpos=left
]
\addplot[
    color=red,
    mark=diamond,
    mark size = \MarkerSize,
    line width=\LineWidth,
    ]
table[x=EBNO,y=FER]{tikz/fer_blind/data/k\NumUsers_blind.dat};

\ifthenelse{\equal{\NumUsers}{4}}
{}
{
\addplot[
    color=blue,
    mark=+,
    mark size = \MarkerSize,
    line width=\LineWidth,
    ]
table[x=EBNO,y=FER]{tikz/fer_blind/data/k\NumUsers_known_k.dat};
}
\addplot[
    rgb color={0, 128, 0},
    mark=square,
    mark size = \MarkerSize,
    line width=\LineWidth,
    ]
table[x=EBNO,y=FER]{tikz/fer_blind/data/k\NumUsers_known_kh.dat};
\addplot[
    rgb color={0, 115, 189},
    mark=o,
    mark size = \MarkerSize,
    line width=\LineWidth,
    ]
table[x=EBNO,y=FER]{tikz/fer_blind/data/k\NumUsers_fbl.dat};
\addplot[
    rgb color={217, 84, 26},
    mark=x,
    mark size = \MarkerSize,
    line width=\LineWidth,
    ]
table[x=EBNO,y=FER]{tikz/fer_blind/data/k\NumUsers_converse.dat};
\ifthenelse{\equal{\NumUsers}{4}}
{
\legend{
$K=\NumUsers$,
$K=\NumUsers${,} known $H$,
$K=\NumUsers${,} FBL achievability,
$K=\NumUsers${,} Converse,
}
}
{
\legend{
$K=\NumUsers${,} blind,
$K=\NumUsers${,} known $K$,
$K=\NumUsers${,} known $H$ and $K$,
$K=\NumUsers${,} FBL achievability,
$K=\NumUsers${,} Converse,
}
}
\end{axis}
\end{tikzpicture}
\caption{Simulation results for $K= \NumUsers$ users}
\label{fig:K\NumUsers}
\end{figure}
}[1]{
\begin{figure}[ht]
\centering
\def\NumUsers{#1}
\input{tikz/fer_blind/single_column.tex}
\begin{tikzpicture}[define rgb/.code={\definecolor{mycolor}{RGB}{#1}},
                    rgb color/.style={define rgb={#1},mycolor}]
\begin{axis}[
    width=\FigureWidth, height=\FigureHeight,
    ymode=log,
    xlabel={${E_b}/{N_0}$, dB},
    ylabel={$P_e$},
    xmin=6,    xmax=22.1,
    ymin=6e-3, ymax=0.4,
    legend cell align={left},
    legend style={at={(0.99,0.99)}, anchor=north east},
    axis line style={latex-latex},
    grid=both,
    grid style={line width=.1pt, draw=gray!20},
    major grid style={line width=.2pt,draw=gray!50},
    tick align=inside,
    tickpos=left
]
\addplot[
    color=red,
    mark=diamond,
    mark size = \MarkerSize,
    line width=\LineWidth,
    ]
table[x=EBNO,y=FER]{tikz/fer_blind/data/k\NumUsers_blind.dat};

\ifthenelse{\equal{\NumUsers}{4}}
{}
{
\addplot[
    color=blue,
    mark=+,
    mark size = \MarkerSize,
    line width=\LineWidth,
    ]
table[x=EBNO,y=FER]{tikz/fer_blind/data/k\NumUsers_known_k.dat};
}
\addplot[
    rgb color={0, 128, 0},
    mark=square,
    mark size = \MarkerSize,
    line width=\LineWidth,
    ]
table[x=EBNO,y=FER]{tikz/fer_blind/data/k\NumUsers_known_kh.dat};
\addplot[
    rgb color={0, 115, 189},
    mark=o,
    mark size = \MarkerSize,
    line width=\LineWidth,
    ]
table[x=EBNO,y=FER]{tikz/fer_blind/data/k\NumUsers_fbl.dat};
\addplot[
    rgb color={217, 84, 26},
    mark=x,
    mark size = \MarkerSize,
    line width=\LineWidth,
    ]
table[x=EBNO,y=FER]{tikz/fer_blind/data/k\NumUsers_converse.dat};
\ifthenelse{\equal{\NumUsers}{4}}
{
\legend{
$K=\NumUsers$,
$K=\NumUsers${,} known $H$,
$K=\NumUsers${,} FBL achievability,
$K=\NumUsers${,} Converse,
}
}
{
\legend{
$K=\NumUsers${,} blind,
$K=\NumUsers${,} known $K$,
$K=\NumUsers${,} known $H$ and $K$,
$K=\NumUsers${,} FBL achievability,
$K=\NumUsers${,} Converse,
}
}
\end{axis}
\end{tikzpicture}
\caption{Simulation results for $K= \NumUsers$ users}
\label{fig:K\NumUsers}
\end{figure}
}

\begin{figure}[ht]
\centering
\def\NumUsers{2}
\input{tikz/fer_blind/single_column.tex}
\begin{tikzpicture}[define rgb/.code={\definecolor{mycolor}{RGB}{#1}},
                    rgb color/.style={define rgb={#1},mycolor}]
\begin{axis}[
    width=\FigureWidth, height=\FigureHeight,
    ymode=log,
    xlabel={${E_b}/{N_0}$, dB},
    ylabel={$P_e$},
    xmin=6,    xmax=22.1,
    ymin=6e-3, ymax=0.4,
    legend cell align={left},
    legend style={at={(0.99,0.99)}, anchor=north east},
    axis line style={latex-latex},
    grid=both,
    grid style={line width=.1pt, draw=gray!20},
    major grid style={line width=.2pt,draw=gray!50},
    tick align=inside,
    tickpos=left
]
\addplot[
    color=red,
    mark=diamond,
    mark size = \MarkerSize,
    line width=\LineWidth,
    ]
table[x=EBNO,y=FER]{tikz/fer_blind/data/k\NumUsers_blind.dat};

\ifthenelse{\equal{\NumUsers}{4}}
{}
{
\addplot[
    color=blue,
    mark=+,
    mark size = \MarkerSize,
    line width=\LineWidth,
    ]
table[x=EBNO,y=FER]{tikz/fer_blind/data/k\NumUsers_known_k.dat};
}
\addplot[
    rgb color={0, 128, 0},
    mark=square,
    mark size = \MarkerSize,
    line width=\LineWidth,
    ]
table[x=EBNO,y=FER]{tikz/fer_blind/data/k\NumUsers_known_kh.dat};
\addplot[
    rgb color={0, 115, 189},
    mark=o,
    mark size = \MarkerSize,
    line width=\LineWidth,
    ]
table[x=EBNO,y=FER]{tikz/fer_blind/data/k\NumUsers_fbl.dat};
\addplot[
    rgb color={217, 84, 26},
    mark=x,
    mark size = \MarkerSize,
    line width=\LineWidth,
    ]
table[x=EBNO,y=FER]{tikz/fer_blind/data/k\NumUsers_converse.dat};
\ifthenelse{\equal{\NumUsers}{4}}
{
\legend{
$K=\NumUsers$,
$K=\NumUsers${,} known $H$,
$K=\NumUsers${,} FBL achievability,
$K=\NumUsers${,} Converse,
}
}
{
\legend{
$K=\NumUsers${,} blind,
$K=\NumUsers${,} known $K$,
$K=\NumUsers${,} known $H$ and $K$,
$K=\NumUsers${,} FBL achievability,
$K=\NumUsers${,} Converse,
}
}
\end{axis}
\end{tikzpicture}
\caption{Simulation results for $K= \NumUsers$ users}
\label{fig:K\NumUsers}
\end{figure}

\begin{figure}[ht]
\centering
\def\NumUsers{3}
\input{tikz/fer_blind/single_column.tex}
\begin{tikzpicture}[define rgb/.code={\definecolor{mycolor}{RGB}{#1}},
                    rgb color/.style={define rgb={#1},mycolor}]
\begin{axis}[
    width=\FigureWidth, height=\FigureHeight,
    ymode=log,
    xlabel={${E_b}/{N_0}$, dB},
    ylabel={$P_e$},
    xmin=6,    xmax=22.1,
    ymin=6e-3, ymax=0.4,
    legend cell align={left},
    legend style={at={(0.99,0.99)}, anchor=north east},
    axis line style={latex-latex},
    grid=both,
    grid style={line width=.1pt, draw=gray!20},
    major grid style={line width=.2pt,draw=gray!50},
    tick align=inside,
    tickpos=left
]
\addplot[
    color=red,
    mark=diamond,
    mark size = \MarkerSize,
    line width=\LineWidth,
    ]
table[x=EBNO,y=FER]{tikz/fer_blind/data/k\NumUsers_blind.dat};

\ifthenelse{\equal{\NumUsers}{4}}
{}
{
\addplot[
    color=blue,
    mark=+,
    mark size = \MarkerSize,
    line width=\LineWidth,
    ]
table[x=EBNO,y=FER]{tikz/fer_blind/data/k\NumUsers_known_k.dat};
}
\addplot[
    rgb color={0, 128, 0},
    mark=square,
    mark size = \MarkerSize,
    line width=\LineWidth,
    ]
table[x=EBNO,y=FER]{tikz/fer_blind/data/k\NumUsers_known_kh.dat};
\addplot[
    rgb color={0, 115, 189},
    mark=o,
    mark size = \MarkerSize,
    line width=\LineWidth,
    ]
table[x=EBNO,y=FER]{tikz/fer_blind/data/k\NumUsers_fbl.dat};
\addplot[
    rgb color={217, 84, 26},
    mark=x,
    mark size = \MarkerSize,
    line width=\LineWidth,
    ]
table[x=EBNO,y=FER]{tikz/fer_blind/data/k\NumUsers_converse.dat};
\ifthenelse{\equal{\NumUsers}{4}}
{
\legend{
$K=\NumUsers$,
$K=\NumUsers${,} known $H$,
$K=\NumUsers${,} FBL achievability,
$K=\NumUsers${,} Converse,
}
}
{
\legend{
$K=\NumUsers${,} blind,
$K=\NumUsers${,} known $K$,
$K=\NumUsers${,} known $H$ and $K$,
$K=\NumUsers${,} FBL achievability,
$K=\NumUsers${,} Converse,
}
}
\end{axis}
\end{tikzpicture}
\caption{Simulation results for $K= \NumUsers$ users}
\label{fig:K\NumUsers}
\end{figure}

\begin{figure}[ht]
\centering
\def\NumUsers{4}
\input{tikz/fer_blind/single_column.tex}
\begin{tikzpicture}[define rgb/.code={\definecolor{mycolor}{RGB}{#1}},
                    rgb color/.style={define rgb={#1},mycolor}]
\begin{axis}[
    width=\FigureWidth, height=\FigureHeight,
    ymode=log,
    xlabel={${E_b}/{N_0}$, dB},
    ylabel={$P_e$},
    xmin=6,    xmax=22.1,
    ymin=6e-3, ymax=0.4,
    legend cell align={left},
    legend style={at={(0.99,0.99)}, anchor=north east},
    axis line style={latex-latex},
    grid=both,
    grid style={line width=.1pt, draw=gray!20},
    major grid style={line width=.2pt,draw=gray!50},
    tick align=inside,
    tickpos=left
]
\addplot[
    color=red,
    mark=diamond,
    mark size = \MarkerSize,
    line width=\LineWidth,
    ]
table[x=EBNO,y=FER]{tikz/fer_blind/data/k\NumUsers_blind.dat};

\ifthenelse{\equal{\NumUsers}{4}}
{}
{
\addplot[
    color=blue,
    mark=+,
    mark size = \MarkerSize,
    line width=\LineWidth,
    ]
table[x=EBNO,y=FER]{tikz/fer_blind/data/k\NumUsers_known_k.dat};
}
\addplot[
    rgb color={0, 128, 0},
    mark=square,
    mark size = \MarkerSize,
    line width=\LineWidth,
    ]
table[x=EBNO,y=FER]{tikz/fer_blind/data/k\NumUsers_known_kh.dat};
\addplot[
    rgb color={0, 115, 189},
    mark=o,
    mark size = \MarkerSize,
    line width=\LineWidth,
    ]
table[x=EBNO,y=FER]{tikz/fer_blind/data/k\NumUsers_fbl.dat};
\addplot[
    rgb color={217, 84, 26},
    mark=x,
    mark size = \MarkerSize,
    line width=\LineWidth,
    ]
table[x=EBNO,y=FER]{tikz/fer_blind/data/k\NumUsers_converse.dat};
\ifthenelse{\equal{\NumUsers}{4}}
{
\legend{
$K=\NumUsers$,
$K=\NumUsers${,} known $H$,
$K=\NumUsers${,} FBL achievability,
$K=\NumUsers${,} Converse,
}
}
{
\legend{
$K=\NumUsers${,} blind,
$K=\NumUsers${,} known $K$,
$K=\NumUsers${,} known $H$ and $K$,
$K=\NumUsers${,} FBL achievability,
$K=\NumUsers${,} Converse,
}
}
\end{axis}
\end{tikzpicture}
\caption{Simulation results for $K= \NumUsers$ users}
\label{fig:K\NumUsers}
\end{figure}

\section{Single-component GM performance}
\begin{figure}[ht]
\centering
\input{tikz/fer_blind/single_column.tex}
\begin{tikzpicture}[define rgb/.code={\definecolor{mycolor}{RGB}{#1}},
                    rgb color/.style={define rgb={#1},mycolor}]
\begin{axis}[
    width=\FigureWidth, height=\FigureHeight,
    ymode=log,
    xlabel={${E_b}/{N_0}$, dB},
    ylabel={$P_e$},
    xmin=11,    xmax=25.1,
    ymin=6e-3, ymax=0.1,
    legend cell align={left},
    legend style={at={(0.99,0.99)}, anchor=north east},
    axis line style={latex-latex},
    grid=both,
    grid style={line width=.1pt, draw=gray!20},
    major grid style={line width=.2pt,draw=gray!50},
    tick align=inside,
    tickpos=left
]
\addplot[
    color=red,
    mark=diamond,
    mark size = \MarkerSize,
    line width=\LineWidth,
    ]
table[x=EBNO,y=FER]{tikz/hard_decision/data/gm_full.dat};

\addplot[
    color=blue,
    mark=+,
    mark size = \MarkerSize,
    line width=\LineWidth,
    ]
table[x=EBNO,y=FER]{tikz/hard_decision/data/gm_simple.dat};


\legend{
$K=4${,} multi-component GM,
$K=4${,} single-component GM,
}
\end{axis}
\end{tikzpicture}
\caption{Simulation results for $K=4$ users. Single component GM and multiple-component GM model (including component merge and prune model) frame error rate performance}
\label{fig:hard_decision}
\end{figure}
\begin{table}[h]
\caption{GM parameters for single and multiple component GM model\label{tab:sim_settings_gm}}
\begin{tabular}{|l|r|r|}
\hline
\multicolumn{1}{|c|}{Parameter} & Multi-component GM & Single-component GM \\
\hline
Gaussian mixture merge distance ($d_{min}$)              &    1      & $-$ \\
Gaussian mixture maximum component count ($\nu$)         &  500      & 1   \\
GM sample count to evaluate~\eqref{eq:decode_func_nodes} &   20      & 20  \\
Maximum cumulative weight to drop at prune (The components with the least weights are dropped before prune)  & $10^{-3}$ & $-$ \\
\hline
\end{tabular}
\end{table}
Let us consider how the GM configuration affects the overall decoding performance. The algorithm complexity highly depends on the maximum number of components $\nu$ allowed in the GM. Merge and prune procedures keep the maximum component count under some threshold. To address this issue, we have evaluated the frame error rate performance for $K=4$ users with the decoder having different settings. In the first setup, we have utilized multiple-component GM with merge and prune procedures (as before). The second setup assumes single-component GM with the merge procedure being disabled. Let's again consider the same $[400, 100]$ LDPC code as in the previous Appendix~\ref{app:slot}.

The second setup should be explained in more details. Recall to the four message types described in section~\ref{sec:ldpc}. Each GM at every message passing step consists of a single component with the highest probability and the sampling (required to evaluate~\eqref{eq:decode_func_nodes}) is performed from a single Gaussian distribution. As soon as the only GM component retains after every message type passing, there is no need to perform the merge procedure. It is worth to note that the merge procedure can only increase the covariance of the components to retain in the merge list. The most important change in the decoding algorithm (see section~\ref{sec:ldpc}) with single-component GM is equation~\eqref{eq:update_h_R}. The most probable symbol is considered in this case (because the second alternative for BPSK constellation point will be immediately dropped by prune procedure under the limit $\nu=1$). Detailed difference in the GM parameters is shown in Table~\ref{tab:sim_settings_gm}.

The frame error rate performance is shown on Figure~\ref{fig:hard_decision}. Let us explain all the curves in the figure.
\begin{itemize}
\item Red curve corresponds to our most complex decoder from Appendix~\ref{app:slot}, which utilizes Gaussian mixtures with a large number of components ($\nu = 500$, see Table~\ref{tab:sim_settings_gm}). The decoder performs merge and prune operations to guarantee that the number of components is less or equal than $\nu$. In this case each message is a pair of vectors $(\overline{\mu}, \overline{\sigma})$ -- means and variances, each vector is of length $\nu$. Real and imaginary components of fading coefficients estimates were represented by different mean and covariance vectors.
\item Blue curve corresponds to the case when $\nu = 1$. We still perform prune operation but do not perform merge operation. At each step, the most probable component is chosen. So, in this case, the message is a pair of scalar values $(\mu, \sigma)$ (again, real and imaginary parts are considered separately). The decoder has a surprisingly good performance.
\end{itemize}

One can see that the GM configuration affects the performance only at higher $E_b/N_0$. We see that the simpler the decoder the higher the error floor. For the blue and red curves we decided to perform simulation in $E_b/N_0$ range $[20, 25]$~dB to verify if error floor of the blue curve is higher. 

An important moment is that all the decoders do several independent decoding attempts as described above. As described in section~\ref{sec:blind_decoder}, multiple attempts are needed to guarantee that decoder will not fall in to local maximum of~\eqref{eq:argmax_functional}. Otherwise, the performance is bad. This can be the explanation of the fact that single-component GM works fine.

\end{document}